\newtheorem{observation}{Observation}
\newcommand{\blue}[1]{{\textcolor{black}{#1}}}
\newcommand{\green}[1]{{\textcolor{black}{#1}}}
\newcommand{\remove}[1]{}
\newcommand{\IR}{\mathbb{R}}
\newcommand{\Pb}[3]{
	\smallskip
	\noindent\fbox{
		\begin{minipage}{.95\textwidth}
			\begin{tabular*}{\textwidth}{@{\extracolsep{\fill}}lr} Problem: #1 \\ \end{tabular*}
			{\bf{Input:}} #2  \\
			{\bf{Output:}} #3
		\end{minipage}
	}
	\medskip}
\begin{document}
\nolinenumbers
\title{Complexity and Approximation for Discriminating and Identifying Code Problems in Geometric Setups\footnote{A preliminary version of this paper was presented at the ISAAC 2020 conference and appeared as~\cite{isaac}.}
}

\author{Sanjana Dey         \and
        Florent Foucaud 	\and
        Subhas C. Nandy		\and
        Arunabha Sen		
}

\authorrunning{S. Dey et al.}
\titlerunning{Complexity and Approximation for Discriminating and Identifying Code Problems in Geometric Setups}
\institute{Sanjana Dey \at
	ACM Unit, Indian Statistical Institute, Kolkata, India\\
	\email{info4.sanjana@gmail.com}           
	\and
	Florent Foucaud \at
        LIMOS, CNRS UMR 6158, Universit\'e Clermont Auvergne, Aubi\`ere, France\\
	Univ. Bordeaux, Bordeaux INP, CNRS, LaBRI, UMR5800, F-33400 Talence, France\\
        Univ. Orl\'eans, INSA Centre Val de Loire, LIFO EA 4022, F-45067 Orl\'eans, France\\
	This author was partially funded by the French government IDEX-ISITE initiative 16-IDEX-0001 (CAP 20-25), by the ANR project GRALMECO (ANR-21-CE48-0004-01), by the ANR project HOSIGRA (ANR-17-CE40-0022) and by the IFCAM project ''Applications of graph homomorphisms'' (MA/IFCAM/18/39).\\
	\email{florent.foucaud@uca.fr}
	\and
	Subhas C. Nandy \at
	ACM Unit, Indian Statistical Institute, Kolkata, India\\
	\email{subhas.c.nandy@gmail.com}
	\and
	Arunabha Sen \at
	Arizona State University, Tempe, Arizona 85287, USA\\
	\email{arunabha.sen@asu.edu} 
}

\date{}

\maketitle

\begin{abstract}
We study geometric variations of the discriminating code problem. In the \emph{discrete version} of the problem, a finite set of points $P$ and a finite set of objects $S$ are given in $\mathbb{R}^d$. The objective is to choose a subset $S^* \subseteq S$ of minimum cardinality such that for each point $p_i \in P$, the subset $S_i^* \subseteq S^*$ covering $p_i$ satisfies $S_i^*\neq \emptyset$, and each pair $p_i,p_j \in P$, $i \neq j$, we have $S_i^* \neq S_j^*$. In the \emph{continuous version} of the problem, the solution set $S^*$ can be chosen freely among a (potentially infinite) class of allowed geometric objects.

In the 1-dimensional case ($d=1$), the points in $P$ are placed on a horizontal line $L$, and the objects in $S$ are finite-length line segments aligned with $L$ (called intervals). We show that the discrete version of this problem is NP-complete. This is somewhat surprising as the continuous version is known to be polynomial-time solvable. This is also in contrast with most geometric covering problems, which are usually polynomial-time solvable in one dimension. Still for the 1-dimensional discrete version, we design a polynomial-time $2$-approximation algorithm. We also design a PTAS for both discrete and continuous versions in one dimension, for the restriction where the intervals are all required to have the same length.

We then study the 2-dimensional case ($d=2$) for axis-parallel unit square objects. We show that both continuous and discrete versions are NP-complete, and design polynomial-time approximation algorithms that produce $(16\cdot OPT+1)$-approximate and $(64\cdot OPT+1)$-approximate solutions respectively, using rounding of suitably defined integer linear programming problems.

Finally, we apply our techniques to a related variant of the discrete problem, where instead of points and geometric objects we just have a set $S$ of objects. The goal is to select a small subset $S^*$ of objects so that all objects of $S$ are discriminated by their intersection with the objects of $S^*$. This problem can be viewed as a graph problem by stating it in terms of the vertices of the geometric intersection graph of $S$. Under this graph-theoretical form, it is known as the \emph{identifying code problem}. 
We show that the identifying code problem for axis-parallel unit square intersection graphs (in $d=2$) can be solved in the same manner as for the discrete version of the discriminating code problem for unit square objects described above, and all our positive approximation results still hold in this setting.
\end{abstract}
 
\keywords{Discriminating code \and Identifying code \and Approximation algorithm \and Segment stabbing \and Geometric hitting set}

\section{Introduction} \label{intro}
We consider geometric versions of the \textsc{Discriminating Code} problem, which are variations of classical geometric covering problems. Here, a set of point sites $P=\{p_1,p_2, \ldots, p_n\}$ is given in $\mathbb{R}^d$. Let $S$ be a set of geometric objects (i.e. closed curves along with their interior) in $\mathbb{R}^d$, where each $s_i \in S$ has a unique identification. We use $S_i\subseteq S$ to denote the set of objects containing $p_i \in P$. The objective is to choose a minimum-size subset $S^* \subseteq S$ of objects such that (i) the subset $S^*_i \subset S^*$ containing $p_i$ is not an empty set for each $i=1,2,\ldots, n$, and (ii) each pair of points $p_i, p_j \in P$, $i\neq j$, satisfy $S^*_i \neq S^*_j$. Thus, (i) suggests that each $p_i$ has an identification, and (ii) suggests that $p_i$ and $p_j$ ($i \neq j$) are discriminated by at least one element in $S^*_i$ or $S^*_j$. From now onwards, we will use $d$ to denote $d$-dimension for some integer $d$.  

In the \emph{discrete} version, both $P$ and $S$ are given as input. In the \emph{continuous} version, only the points in $P$ are given, and the objects can be chosen freely (among some infinite class of allowed objects).


The problem is motivated as follows. Consider a terrain that is difficult to navigate. A set of sensors, each assigned a unique identification number ($id$), are deployed in that terrain, all of which  can communicate with a single base station. If a region of the terrain suffers from some specific problem, a subset of sensors will detect that and inform the base station. From the $id$'s of the alerted sensors, one can uniquely identify the affected region, and a rescue team can be sent. The covering zone of each sensor can be represented by an object in $S$. In the \emph{arrangement} of these objects (that is, the partition of the plane into \emph{cells} defined by the union of boundaries of all objects, each cell corresponds to a face of this union~\cite{Berg}) divides the entire plane into regions. A representative point of each region may be considered as a site. The set $P$ consists of some of those sites. We need to determine the minimum number of sensors such that no two sites in $P$ are covered by the same set of $id$s. Apart from coverage problems in sensor networks, this problem has applications in fault detection, heat prone zone in VLSI circuits, disaster management, environmental monitoring, localization and contamination detection~\cite{Laifenfeld,Ray2004RobustLD}, to name a few.

The general version of the problem has been formulated as a combinatorial problem in a bipartite graph in~\cite{CharbitCCH06,CharonCHL08} as follows.

\Pb{\textsc{Minimum Discriminating Code} (\textsc{Min-Disc-Code})}{A connected bipartite graph $G=(U\cup V,E)$, where $E \subseteq \{(u,v)|u \in U, v \in V\}$.}{A minimum-size subset $U^* \subseteq U$ such that $U^* \cap N(v) \neq \emptyset$ for all $v \in V$, and $U^*\cap N(v) \neq U^*\cap N(v')$ for every pair $v, v' \in V$, $v\neq v'$.}

In the geometric version of the \textsc{Min-Disc-Code} problem introduced in~\cite{DRCN}, which will be further referred to as \textsc{G-Min-Disc-Code}, the two sets of nodes in the bipartite graph $G$ are $U=S$, a set of geometric objects, and $V=P$, a set of points in $\mathbb{R}^d$; an object in $S$ is adjacent to all the points in $P$ that it contains. \green{(Graph $G$ is called the \emph{incidence graph of $(P,S)$.})} The {\em id} of a point $p \in P$ with respect to the set $S$ is the union of the id's of the subset $S'\subseteq S$ that contains $p$. Given an instance $(P,S)$, two points $p_i,p_j \in P$ are called {\em twins} if each member in $S$ that contains $p_i$ also contains $p_j$, and vice-versa. An instance $(P,S)$ of \textsc{G-Min-Disc-Code} is {\em twin-free} if no two points in $P$ are twins. 
As mentioned earlier, for a twin-free instance, a subset of $S$ that can uniquely assign  id's to all the points in $P$ is said to \emph{discriminate} the points of $P$ and is called a {\em discriminating code}. 
In the discrete version of the problem, the set $S$ of objects is also given along with the set of points $P$ as the input; the objective is to find a subset $S^* \subseteq S$ of minimum cardinality that is a discriminating code for the points in $P$. In the continuous version, we can freely choose the objects $S^*$ in $\mathbb{R}^d$ such that each point gets a unique id, and the size of the set $S^*$ is minimum. The two problems are formally stated as follows.

\Pb{\textsc{Continuous-G-Min-Disc-Code (for objects of a prescribed type)}}
{A point set $P$ to be discriminated.}
{A minimum-size set $S^*$ of objects of the prescribed type, placed {\em anywhere} in the region under consideration, that discriminates the points in $P$.}

\Pb{\textsc{Discrete-G-Min-Disc-Code}}
{A point set $P$ to be discriminated, and a set of objects $S$ to be used for the discrimination.}
{A minimum-size subset $S^* \subseteq S$ that discriminates all points in $P$.}

\green{We can use~\cite{twins} to show the following.}

\begin{proposition}\label{twin-free}
Checking whether a given instance $(P,S)$ of \textsc{Discrete-G-Min-Disc-Code} with $|P|=n$ and $|S|=m$ is twin-free can be done in time $O(m\cdot n)$.
\end{proposition}
\begin{proof}
  \green{It is known that checking whether a graph has twins (vertices with the same neighbourhood) can be done in linear time (in terms of the number of vertices and edges) using the technique of \emph{partition refinement} (see Algorithm~2 from~\cite[Section~3]{twins}). 
    As mentioned before, our geometric instance $(P,S)$ can be associated to its incidence graph $G=(U\cup V,E)$ and is twin-free if and only if there is no pair of graph-twins inside $V$ in $G$. Thus, we can build the graph $G$ in linear time by directly using the description of $(P,S)$, and then applying the linear-time algorithm from~\cite{twins}. The graph $G$ has $m+n$ vertices and $O(m\cdot n)$ edges and thus, the running time follows.}\hfill\qed
\end{proof}

Note that one can also use the above method to decide whether an instance $P$ of \textsc{Continuous-G-Min-Disc-Code} is twin-free, by computing all feasible equivalence classes of objects that can be placed (with respect to the intersections with $P$). However, in general there can be as many as $2^{|P|}$ such equivalence classes, and it might not always be trivial to compute them: this would depend on the context.

A well-studied special case of \textsc{Min-Disc-Code} for graphs is the problem \textsc{Minimum Identifying Code}, initially defined in~\cite{KarpovskyCL98} as follows (where $N[v_i]$ denotes the set of vertices adjacent to $v_i$, together with $v_i$ itself):

\Pb{\textsc{Minimum Identifying Code} (\textsc{Min-ID-Code})}{A graph $G=(V,E)$.}{A subset $V^* \subseteq V$ of minimum size such that $V^*$ is a dominating set, and $V^*\cap N[v_i] \neq V^*\cap N[v_j]$ for every pair $v_i, v_j \in V$, $i\neq j$.}

The \textsc{Min-ID-Code} problem is a special case of \textsc{Min-Disc-Code}~\cite{CharbitCCH06,Foucaud15}: indeed, given a graph $G$, let $B(G)$ denote the closed neighbourhood incidence bipartite graph of $G$ (one part consists of the vertex set of $G$, and the other part, of the (multi-)set of all closed neighbouroods of vertices of $G$; a closed neighbourhood vertex is adjacent to all the vertices it contains). Then, \textsc{Min-ID-Code} for $G$ is equivalent to \textsc{Min-Disc-Code} for $B(G)$. Both \textsc{Min-ID-Code} and \textsc{Min-Disc-Code} are NP-complete~\cite{CharbitCCH06,CharonCHL08,CharonHL03}. A polynomial-time algorithm is available for \textsc{Min-Disc-Code} on trees~\cite{CharonCHL08}. It was shown that \textsc{Min-ID-Code} for graphs (and thus \textsc{Min-Disc-Code}) is $\log$-APX hard~\cite{LaifenfeldT08}, and this holds even for split graphs, bipartite graphs, co-bipartite graphs~\cite{Foucaud15}, and for bipartite graphs of girth~6~\cite{BousquetLLPT15}. However, for line graphs and planar graphs, \textsc{Min-ID-Code} remains NP-complete but constant factor approximation algorithms are available (see~\cite{FoucaudGNPV13} and~\cite{BazganFS19}, respectively). 

\textsc{Min-ID-Code} was studied in particular for the related setting of geometric intersection graphs, for example on unit disk graphs~\cite{MS09} and interval graphs~\cite{BousquetLLPT15,Foucaud,FoucaudMNPV17}. The problem remains NP-complete for these graph classes, and a 6-approximation algorithm exists for interval graphs~\cite{BousquetLLPT15}. For unit interval graphs, the computational hardness is not known, but a polynomial-time approximation scheme (PTAS) algorithm is given in the second author's PhD thesis~\cite{Foucaud}. A PTAS also exists for \textsc{Min-ID-Code} on planar graphs~\cite{BazganFS19}.

\noindent{\bf Further related work.}
In the context of the above-mentioned practical applications, the \textsc{Discrete-G-Min-Disc-Code} problem in 2D was defined in~\cite{DRCN}, where an integer programming formulation (ILP) of the problem was given along with an experimental study. The \textsc{Continuous-G-Min-Disc-Code} problem was introduced under a different name in~\cite{GledelP19}, and shown to be NP-complete for disks in 2D, but polynomial-time in 1D (even when the intervals are restricted to have bounded length). These two problems are related to the class of \emph{geometric covering problems}, for which also both the discrete and continuous version are studied extensively~\cite{covering}. A related problem is the \textsc{Test Cover} problem~\cite{BontridderHHHLRS03}, which is similar to \textsc{Min-Disc-Code} (but defined on hypergraphs). It is equivalent to the variant of \textsc{Min-Disc-Code} where the covering condition ``$U^* \cap N(v) \neq \emptyset$'' is not required. Thus, a discriminating code is a test cover, but the converse may not be true, since there may exist a vertex uncovered by a test cover. Geometric versions of \textsc{Test Cover} have been studied under various names. For example, the \emph{separation} problems in~\cite{Boland1995,CDKW05,HM20} can be seen as continuous geometric versions of \textsc{Test Cover} in 2D, where the objects are half-planes. Similar problems are also sometimes called \emph{shattering} problems, see~\cite{NANDY02}.

More references on several coding mechanisms on graphs based on different applications, namely locating-dominating sets, open locating dominating sets, metric dimension, etc, and their computational hardness results are available in~\cite{Foucaud,FoucaudMNPV17}.

\noindent{\bf Our results.} We show that \textsc{Discrete-G-Min-Disc-Code} in 1D using interval objects of arbitrary length, is NP-complete by using a polynomial time reduction from a restricted version of the 3-SAT problem. Here, the challenge is to overcome the linear nature of the problem and to transmit the information across the entire construction without affecting intermediate regions. This result is in contrast with \textsc{Continuous-G-Min-Disc-Code} in 1D, which is polynomial-time solvable~\cite{GledelP19}. This is also in contrast with most geometric covering problems, which are often polynomial-time solvable in 1D~\cite{covering}.

We then design a simple polynomial-time $2$-factor approximation algorithm for \textsc{Discrete-G-Min-Disc-Code} in 1D, that is much simpler than that published in the preliminary version of this paper~\cite{isaac}.

We also design a PTAS for both \textsc{Discrete-G-Min-Disc-Code} and \textsc{Continuous-G-Min-Disc-Code} in 1D, when all the objects are required to have the same (unit) length. In this context, it needs to be mentioned once again that \textsc{Continuous-G-Min-Disc-Code} for arbitrary intervals is polynomially solvable~\cite{GledelP19}.  

We also study both problems in 2D for axis-parallel unit square objects, which form a natural extension of 1D intervals to the 2D setting. The continuous version is known to be NP-complete for unit disks~\cite{GledelP19}, and we show that the reduction can be adapted to our setting, for both the continuous and discrete cases.

We then design polynomial-time constant-factor approximation algorithms for both problems in that setting. The approximation factors are $16$ and $64$ for the continuous and discrete problem respectively.\footnote{Note that, the $(4+\epsilon)$ factor approximation algorithms presented in the conference version of this paper~\cite{isaac} were wrong and the algorithms have been corrected here.}
To obtain these algorithms, we re-formulate our problems into a problem of \emph{stabbing} line segments in 2D, which can be reduced to a geometric \textsc{Hitting Set} problem. 

Our results on Discriminating Code problems are summarized in Table~\ref{table}. 

\renewcommand{\arraystretch}{2}
\begin{table}
	\centering
	\scalebox{0.8}{
		\begin{tabular}{p{3cm}|c|c|c|c}
			\multirow{2}{2.2cm}{\textsc{Object Type}} & 
			\multicolumn{2}{c}{\textsc{Continuous-G-Min-Disc-Code}} & 
			\multicolumn{2}{c}{\textsc{Discrete-G-Min-Disc-Code}}\\
			\cline{2-5}
			& \textsc{Hardness} & \textsc{Algorithm} & \textsc{Hardness} & 
			\textsc{Algorithm}\\
			\hline
			1D intervals & - & Polynomial-time solvable (\cite{GledelP19}) 
			& NP-hard (Thm.~\ref{thm:1D-NPC}) & $2$-approximable (Thm.~\ref{thth2}) \\ \hline

			1D bounded intervals & - & Polynomial-time solvable (\cite{GledelP19}) 
			& Open & $2$-approximable (Thm.~\ref{thth2}) \\ \hline

			1D unit intervals & Open & PTAS (Cor.~\ref{cor:PTAS}) & Open & PTAS 
			(Thm.~\ref{thm:PTAS}) \\ \hline

			2D axis-parallel unit squares & NP-hard (Thm.~\ref{unitsquares-NPc})& 
			$(16\cdot OPT + 1)$-approximable (Thm.~\ref{8approx}) & NP-hard (Thm.~\ref{unitsquares-NPc})&  $(64\cdot OPT + 1)$-approximable (Thm.~\ref{Dapprox}) \\
	\end{tabular}}
	
	\caption{Summary of our results on \textsc{G-Min-Disc-Code} problems.}
	\label{table}
\end{table}

Finally, we consider the related \textsc{Min-Id-Code} problem restricted to unit square graphs (geometric intersection graphs for 2D axis-parallel unit squares).  
We show that \textsc{Min-Id-Code} for unit square graphs can be solved in the same manner as  the \textsc{Discrete-G-Min-Disc-Code} problem for axis-parallel unit square objects, and our approximation results for \textsc{Discrete-G-Min-Disc-Code} still hold for \textsc{Min-Id-Code} on this class of graphs.

\noindent{\bf Structure of the paper.} We start with our results about \textsc{G-Min-Disc-Code} in 1D in Section~\ref{sec:1D}. We then present our results on \textsc{G-Min-Disc-Code} for axis-parallel unit squares in 2D in Section~\ref{sec:2D}. Our results about \textsc{Min-ID-Code} for unit square intersection graph is presented in Section~\ref{sec:IDcode}. Finally, we conclude the paper in Section~\ref{sec:conclu}.

\section{The \textsc{G-Min-Disc-Code} problem in 1D}\label{sec:1D}

It has been shown that \textsc{Continuous-G-Min-Disc-Code} is polynomial-time solvable in 1D~\cite{GledelP19}. Thus, in this section we focus on \textsc{Discrete-G-Min-Disc-Code}.

An instance $(P,S)$ of the \textsc{Discrete-G-Min-Disc-Code} problem is a set $P=\{p_1, \ldots, p_n\}$ of points and a set $S$ of $m$ intervals of arbitrary lengths placed on a real line $\IR$. Assuming that the points are sorted with respect to their $x$-coordinate values, we define $n+1$ {\it gaps} ${\cal G} = \{g_1, \ldots, g_{n+1}\}$, where $g_1=(-\infty,p_1)$, $g_i=(p_{i-1},p_i)$ for $2\leq i\leq n$, and $g_{n+1}=(p_n,\infty)$.

Observe that (i) if both endpoints of an interval $s \in S$ lie in the same gap of $\cal G$, then it can not  discriminate any pair of points; thus $s$ is \emph{useless}, and (ii) if more than one interval in $S$ have both their endpoints in the same two gaps, say $g_a = (p_a,p_{a+1}), g_b=(p_b,p_{b+1}) \in \cal G$, then both of them discriminate exactly the same point-pairs. Thus, they are \emph{redundant} and we need to keep only one interval among them. In a linear scan, we can first eliminate the useless and redundant intervals. From now onwards, $m$ will denote the number of intervals, none of which are useless or redundant. Hence, $m$ may be $O(n^2)$ in the worst case.

\subsection{NP-completeness}\label{sec:NP-c-1D}

The \textsc{Discrete-G-Min-Disc-Code} problem is in NP, since given a subset $S' \subseteq S$, one can test in polynomial time whether the problem instance $(P,S')$ is twin-free (i.e., whether the id of every point in $P$ induced by $S'$ is unique) by Proposition~\ref{twin-free}. 

We prove the NP-hardness of \textsc{Discrete-G-Min-Disc-Code} using a polynomial-time reduction from the 3-SAT-$2l$ problem~(defined below).

\Pb{3-SAT-$2l$}
{A collection of $m$ clauses $C = \{c_1, c_2, \ldots, c_m\}$ where each clause contains at most three literals, over a set of $n$ Boolean variables $X = \{x_1, x_2, \ldots, x_n\}$, and each literal appears at most twice.}
{A truth assignment of $X$ such that each clause is satisfied (if it exists).}

\begin{theorem}[\cite{SAT}]
  3-SAT-$2l$ is NP-complete.
\end{theorem}
\begin{proof}
\green{It is stated in~\cite[Theorem~2.1]{SAT} that ``Boolean satisfiability is NP-complete when restricted to instances with 2 or 3 variables per clause and at most 3 occurrences per variable''. The following simple argument can be used to derive the statement. Consider a formula with 2 or 3 variables per clause and at most 3 occurrences per variable. If the same literal appears exactly three times in the formula, it means its negation never appears, so we might as well set its variable so that this literal is true, and remove all clauses containing that variable. We then get an equivalent formula. By doing this repeatedly, we obtain an equivalent instance where each literal appears at most twice. Thus, Theorem~2.1 from~\cite{SAT} implies that 3-SAT-$2l$ is NP-complete.\hfill\qed}
\end{proof}



Given an instance $(X,C)$ of 3-SAT-$2l$, we construct in polynomial time an instance $(P,S)=\Gamma(X, C)$ of the \textsc{Discrete-G-Min-Disc-Code} problem on the real line $\mathbb{R}$. The main challenge of this reduction is to be able to connect variable and clause gadgets, despite the linear nature of our 1D setting. The basic idea is that we will construct an instance where some specific set of \emph{critical} point-pairs will need to be discriminated (all other pairs being discriminated by some partial solution forced by our gadgets). Let us start by 
describing our basic gadgets.

\begin{definition}\label{d1}
A \emph{covering gadget} $\Pi$ consists of three intervals $I$, $J$, $K$ and four points $p_1$, $p_2$, $p_3$ and $p_4$ satisfying $p_1\in I$, $p_2 \in I\cap J$, $p_3 \in I \cap J \cap K$ and $p_4 \in J \cap K$ as in Figure~\ref{fig:coverG}. Every other interval of the construction will either contain all four points, or none. There may exist a set of points in $K\setminus\{I\cup J\}$, depending on the need of the reduction.
\end{definition}

\begin{observation}\label{luv}
The points $p_1$,$p_2$,$p_3$,$p_4$ can only be discriminated by choosing all three intervals $I$, $J$, $K$ in the solution. 
\end{observation}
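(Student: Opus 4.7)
The plan is to exploit the fact that every interval other than $I$, $J$, $K$ either contains all four of $p_1, p_2, p_3, p_4$ or none of them. This \emph{all-or-nothing} structure lets us reduce the discrimination of these four points to a purely local calculation inside $\{I,J,K\}$. Concretely, I would write any candidate solution $S^*$ as $A \cup B$, where $A = S^* \setminus \{I,J,K\}$ and $B = S^* \cap \{I,J,K\}$, and observe that for each $i \in \{1,2,3,4\}$ the code of $p_i$ induced by $S^*$ equals $A$ together with the set of intervals in $B$ that actually contain $p_i$. Hence $A$ is a common summand in all four codes, and two among $p_1,p_2,p_3,p_4$ are discriminated if and only if they are discriminated by $B$ alone.

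Next I would record the local incidences from Definition~\ref{d1}, namely that the traces of $p_1,p_2,p_3,p_4$ on $\{I,J,K\}$ are $\{I\}$, $\{I,J\}$, $\{I,J,K\}$, and $\{J,K\}$, respectively. With this table in hand, the observation reduces to checking that removing any one of $I$, $J$, $K$ from $B$ immediately collapses two of these traces.

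The core of the argument is then a clean three-case analysis. If $I \notin B$, then $p_3$ and $p_4$ both induce the same trace $\{J,K\}\cap B$, so they become twins in $S^*$. If $J \notin B$, then $p_1$ and $p_2$ both induce $\{I\}\cap B$, so they become twins. If $K \notin B$, then $p_2$ and $p_3$ both induce $\{I,J\}\cap B$, and again we get twins. In each case $S^*$ fails to be a discriminating code, so any valid solution must contain all three of $I$, $J$, $K$.

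There is no real technical obstacle here: the gadget is engineered precisely to force this rigidity, and the only point that deserves care is the reduction to the local analysis, which rests on the \emph{all-or-nothing} clause of Definition~\ref{d1}. I would therefore make that step explicit at the beginning, and then the three-line case check closes the proof.
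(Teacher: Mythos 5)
Your proof is correct and follows essentially the same route as the paper's: both reduce to the local traces on $\{I,J,K\}$ via the all-or-nothing property of the other intervals, and then run the identical three-case check (no $I$ collapses $p_3,p_4$; no $J$ collapses $p_1,p_2$; no $K$ collapses $p_2,p_3$). Your version merely makes the reduction step more explicit than the paper does.
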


\begin{proof}
Follows from the fact that none of the intervals in $\Gamma(X, C)$, that is not a member of the covering gadget $\Pi$ can discriminate the four points in $\Pi$. Moreover, if we do not choose $I$, then $p_3,p_4$ are not discriminated. If we do not choose $J$, $p_1,p_2$ are not discrimnated. If we do not choose $K$, $p_2,p_3$ are not discriminated (see Figure~\ref{fig:coverG}).\hfill \qed
\end{proof}

The idea of the covering gadget is to forcefully cover the points placed in $K\setminus\{I\cup J\}$, so that they are covered by $K$ (which needs to be in any solution), and hence discriminated from all other points of the construction.
	
	\begin{figure}[ht!]
		\centering
		\includegraphics[scale=0.9]{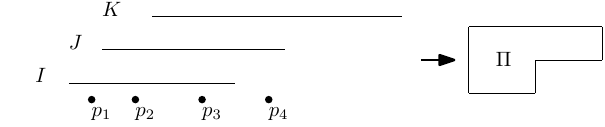}
		\caption{A covering gadget $\Pi$, and its schematic representation.} 
		\label{fig:coverG}
\end{figure}

Let us now define the gadgets modeling the clauses and variables of the 3-SAT-$2l$ instance.

\begin{definition}\label{cl-gadget}
Let $c_i$ be a clause of $C$. The \emph{clause gadget} for $c_i$, denoted $G_c(c_i)$, is defined by a covering gadget $\Pi(c_i)$ along with two points $p_{c_i}, p'_{c_i}$ placed in $K\setminus\{I\cup J\}$ (see Figure~\ref{fig:clauseG}).
\end{definition}

	\begin{figure}[ht!]
		\centering
		\includegraphics[scale=0.75]{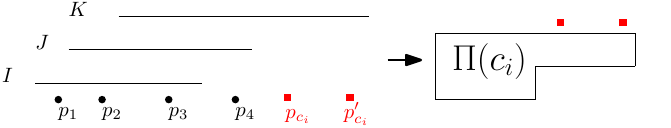}
		\caption{A clause gadget $\Pi(c_i)$, and its schematic representation.} 
		\label{fig:clauseG}
\end{figure}

The idea behind the clause gadget is that some interval that ends between points $p_{c_i}, p'_{c_i}$ will have to be taken in the solution, so that this pair gets discriminated.

\begin{definition}\label{var-gadget}
	Let $x_j$ be a variable of $X$. The \emph{variable gadget} for $x_j$, denoted $G_v(x_j)$, is defined by a covering gadget $\Pi(x_j)$, and five points $p^1_{x_j},\ldots, p^5_{x_j}$ placed consecutively in $K\setminus\{I\cup J\}$. We place six intervals $I_{x_j}^0$, $I_{x_j}^1$, $I_{x_j}^2$, $I_{\overline{x_j}}^0$, $I_{\overline{x_j}}^1$, $I_{\overline{x_j}}^2$, as in Figure~\ref{fig:varG}.
	      
	\begin{itemize}
		\item Interval $I_{x_j}^0$ starts between $p^1_{x_j}$ and $p^2_{x_j}$, and ends between $p^3_{x_j}$ and $p^4_{x_j}$.
		\item Interval $I_{\overline{x_j}}^0$ starts between $p^2_{x_j}$ and $p^3_{x_j}$, and ends between $p^4_{x_j}$ and $p^5_{x_j}$.
		\item Interval $I_{x_j}^1$ starts between $p^2_{x_j}$ and $p^3_{x_j}$, and ends after $p^5_{x_j}$.
		\item Interval $I_{x_j}^2$ starts between $p^4_{x_j}$ and $p^5_{x_j}$, and ends after $p^5_{x_j}$.
		\item Interval $I_{\overline{x_j}}^1$ starts between $p^1_{x_j}$ and $p^2_{x_j}$, and ends after $p^5_{x_j}$.
		\item Interval $I_{\overline{x_j}}^2$ starts between $p^3_{x_j}$ and $p^4_{x_j}$, and ends after $p^5_{x_j}$.
	\end{itemize}

	(The end-point of the four intervals, namely $I_{x_j}^1$, $I_{\overline{x_j}}^1$, $I_{x_j}^2$, $I_{\overline{x_j}}^2$, will be determined at the time of construction of the whole instance.)
\end{definition}	

	\begin{figure}[ht!]
		\centering
		\includegraphics[scale=0.6]{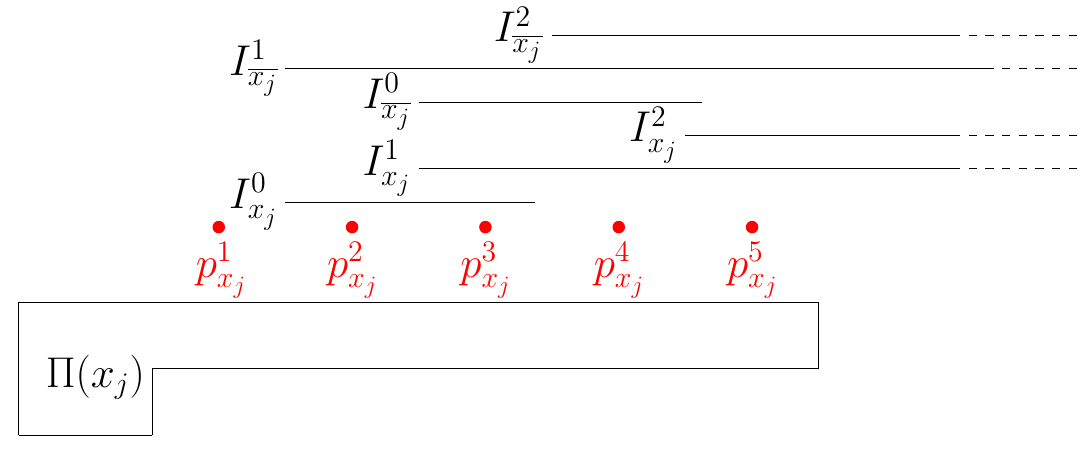}
		\caption{Variable gadget for variable $x_j$.} 
		\label{fig:varG}
	\end{figure}
\remove{ CONFVERSION
\begin{figure}[ht!]
	\centering
	\begin{minipage}{0.25\textwidth}
		\centering
		\includegraphics[scale=0.49]{combineG.pdf}
	\end{minipage}\hspace{0.08\textwidth}
	\begin{minipage}{0.6\textwidth}
		\centering
		\includegraphics[scale=0.37]{variableG.pdf}
	\end{minipage}
	\caption{(a) A covering gadget $\Pi$, and its schematic representation, (b) A 
	clause gadget $G_c(c_i)$, (c) A variable gadget $G_v(x_j)$}
	\label{fig:gadget}	
\end{figure}}

In a variable gadget $G_v({x_j})$, the intervals $I_{x_j}^1$ and $I_{x_j}^2$ represent the two possible occurrences of literal ${x_j}$, while $I_{\overline{x_j}}^1$ and $I_{\overline{x_j}}^2$ represent the two possible occurrences of $\overline{x_j}$. The right end points of each of these four intervals will be in the clause gadget of the clause where the occurrence of that literal takes place. An example for the construction of $\Gamma(X, C)$ is shown in Figure~\ref{fig:setUp}. We assume that every literal appears in at least one clause\footnote{If a literal does not appear in any clause, then we can assign a truth value to its variable so that all its occurrences are true, and further ignore this variable.}.

\begin{figure}[htpb!]
	\centering
	\includegraphics[scale=0.8,angle=90]{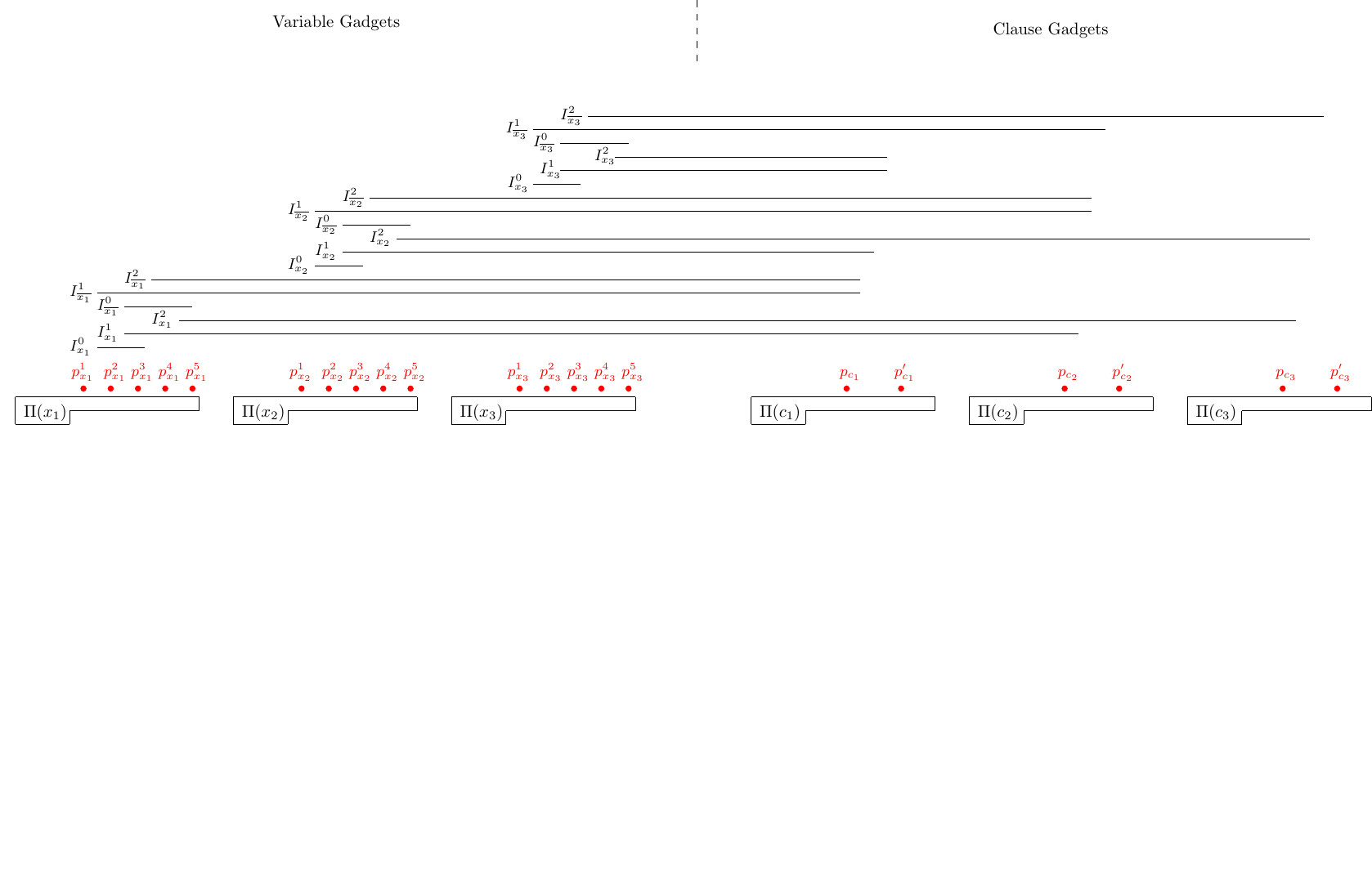}\\
	\caption{The instance $\Gamma(X, C)$ for the formula $(X,C)=(\overline{x_1} \lor x_2 \lor x_3) \land (x_1 \lor \overline{x_2} \lor \overline{x_3}) \land (x_1 \lor x_2 \lor x_3)$.} 
	\label{fig:setUp}
\end{figure}

\begin{itemize}
	\item For each variable $x_i \in X$, $\Gamma(X, C)$ contains a variable gadget $G_v(x_i)$.
	\item The gadgets $G_v(x_1), G_v(x_2), \dots , G_v(x_n)$ are positioned consecutively, in this order, without overlap.
	\item For each clause $c_j \in C$, $\Gamma(X, C)$ contains a clause gadget $G_c(c_j)$.
	\item The gadgets $G_c(c_1), G_c(c_2), \dots, G_c(c_m)$ are positioned consecutively, in this order, to the right of the placement of the variable gadgets, without overlap.
	\item For every variable $x_i$, assume $x_i$ appears in clauses $c_{i_1}$ and $c_{i_2}$, and $\overline{x_i}$ appears in $c_{i_3}$ and $c_{i_4}$ (possibly $i_1=i_2$ or $i_3=i_4$). Then, we extend the intervals $I^1_{x_i}$, $I^1_{\overline{x_i}}$, $I^2_{x_i}$, $I^2_{\overline{x_i}}$ so that $I^1_{x_i}$ ends between $p_{c_{i_1}}$ and $p'_{c_{i_1}}$, $I^2_{x_i}$ ends between $p_{c_{i_2}}$ and $p'_{c_{i_2}}$, $I^1_{\overline{x_i}}$ ends between $p_{c_{i_3}}$ and $p'_{c_{i_3}}$, and $I^2_{\overline{x_i}}$ ends between $p_{c_{i_4}}$ and $p'_{c_{i_4}}$.
\end{itemize}

Let ${\mathcal C}^{\Pi}$ be the union of the discriminating codes (i.e., all intervals of type $I,J,K$, as mentioned in Observation~\ref{luv}) of all covering gadgets of type $\Pi(x_i)$ of $\Gamma(X, C)$.

Consider any covering gadget $\Pi_1$. By Observation~\ref{luv}, the points $p_1,p_2,p_3,p_4$ are discriminated among each other by the set ${\mathcal C}^{\Pi}$, and they are discriminated from all other points of $\Gamma(X, C)$, since they are the only ones to be covered by one of the intervals $I,J$ from $\Pi_1$. Moreover, all the points covered by the interval $K$ from $\Pi_1$ are discriminated from all the points not covered by $K$. Thus, overall, all point-pairs of $\Gamma(X, C)$ are discriminated by ${\mathcal C}^{\Pi}$, except the following \emph{critical} point-pairs:
\begin{itemize}
	\item the point-pairs among the five points $p^1_{x_i},\ldots, p^5_{x_i}$ of each variable gadget $G_v(x_i)$, and
	\item the point-pair $\{p_{c_j},p'_{c_j}\}$ of each clause gadget $G_c(c_j)$.
\end{itemize}
In the proof of the following main result of this section, we will demonstrate, in particular, that if there exists a truth assignment of the variables in $X$ such that all the clauses in $C$ are satisfied, then the critical point-pairs are also discriminated.

\begin{theorem}\label{thm:1D-NPC}
	\textsc{Discrete-G-Min-Disc-Code} in 1D is NP-complete.
\end{theorem}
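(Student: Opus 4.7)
The plan is to establish NP-completeness through the reduction from 3-SAT-$2l$ already set up. Membership in NP has been argued. I would show that $(X,C)$ is satisfiable if and only if $\Gamma(X,C)$ admits a discriminating code of size at most $K = 6n + 3m$, where $n = |X|$ and $m = |C|$. The construction is polynomial, and the resulting instance is twin-free: each cell of the arrangement of the intervals contains at most one point (certifiable via Proposition~\ref{twin-free}). By Observation~\ref{luv}, every discriminating code must contain the $3(n+m)$ intervals of all covering gadgets; call this forced set $\mathcal{C}^\Pi$. Any solution of size at most $6n+3m$ therefore has at most $3n$ further intervals to distribute across the $n$ variable gadgets.

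For the forward direction, given a satisfying assignment $\phi$, I add to $\mathcal{C}^\Pi$ the triple $\{I^0_{x_j}, I^1_{x_j}, I^2_{x_j}\}$ for each $j$ with $\phi(x_j) = $ true, and $\{I^0_{\bar{x_j}}, I^1_{\bar{x_j}}, I^2_{\bar{x_j}}\}$ otherwise. A direct computation of codes inside each variable gadget (for instance, in the positive case the codes of $p^1_{x_j}, \ldots, p^5_{x_j}$ become $\emptyset$, $\{I^0_{x_j}\}$, $\{I^0_{x_j}, I^1_{x_j}\}$, $\{I^1_{x_j}\}$, $\{I^1_{x_j}, I^2_{x_j}\}$) shows that the five variable-gadget points receive pairwise distinct codes. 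For every clause $c_j$, a literal $\ell$ of $c_j$ is satisfied by $\phi$, so the corresponding long interval representing $\ell$'s occurrence in $c_j$ lies in our selection and has its right endpoint between $p_{c_j}$ and $p'_{c_j}$, discriminating that pair. All other point-pairs are already discriminated by $\mathcal{C}^\Pi$, as observed before the statement. The total size of the selection is $3(n+m) + 3n = 6n+3m$.

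For the reverse direction, let $S^*$ be a discriminating code of size at most $6n+3m$. It must include $\mathcal{C}^\Pi$, leaving at most $3n$ further intervals. Within any variable gadget, the five points $p^1_{x_j}, \ldots, p^5_{x_j}$ have identical codes with respect to $\mathcal{C}^\Pi$ and can be refined only by the six intervals of $G_v(x_j)$; since five distinct codes require at least $\lceil \log_2 5 \rceil = 3$ intervals, the budget forces exactly three intervals to be chosen from each variable gadget. The crux of the proof is the following key lemma, to be established by a finite case analysis over the mixed triples: any triple containing both an interval from $\{I^1_{x_j}, I^2_{x_j}\}$ and one from $\{I^1_{\bar{x_j}}, I^2_{\bar{x_j}}\}$ fails to discriminate, because after picking such a mixed pair, either $p^2_{x_j}$ has empty code (conflicting with $p^1_{x_j}$) or three consecutive points among $p^2_{x_j}, \ldots, p^5_{x_j}$ share identical codes, and a single third interval can create at most two classes among three coincident points.

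Given this lemma, in every valid triple all long intervals belong to the same literal. Define $\phi(x_j)$ to be true if the long intervals in $S^* \cap G_v(x_j)$ are among $\{I^1_{x_j}, I^2_{x_j}\}$ and false otherwise; this is well defined because every valid triple contains at least one long interval. For each clause $c_j$, the pair $\{p_{c_j}, p'_{c_j}\}$ is discriminated only by a long interval whose right endpoint lies between them, and such an interval by construction represents a literal of $c_j$ whose variable gadget contributes only long intervals of that type to $S^*$; hence $\phi$ sets that literal to true and satisfies $c_j$. The main obstacle is the key lemma: verifying carefully that no combination of a $x_j$-long interval, a $\bar{x_j}$-long interval, and one additional interval can simultaneously cover $p^2_{x_j}$, leave only $p^1_{x_j}$ with empty code, and keep all five points pairwise distinguished.
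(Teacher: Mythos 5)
Your proposal is correct and follows essentially the same route as the paper: the same reduction and gadgets, the same target size $6n+3m$, an identical forward direction, and a reverse direction that forces $\mathcal{C}^{\Pi}$ plus exactly three further intervals per variable gadget and then reads off a coherent truth assignment. The only differences are organizational: the paper pivots its reverse-direction case analysis on which of $I^0_{x_j}, I^0_{\bar{x_j}}$ are selected, whereas you phrase the same finite check as a \emph{no mixed long intervals} lemma (which is indeed true, since every mixed triple leaves one of the four gaps between $p^1_{x_j},\dots,p^5_{x_j}$ without an interval endpoint, although your sketched justification should also cover the case, e.g.\ $\{I^1_{x_j},I^2_{\bar{x_j}}\}$, where the two surviving coincidences $p^1_{x_j}\equiv p^2_{x_j}$ and $p^4_{x_j}\equiv p^5_{x_j}$ are disjoint pairs rather than a coincident triple), and you replace the paper's gap-endpoint count by a $\lceil\log_2 5\rceil$ argument for the lower bound of three intervals per gadget.
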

\begin{proof}
	We prove that $(X, C)$ is satisfiable if and only if $\Gamma(X, C)$ has a discriminating code of size $6n + 3m$. 	In both parts of the proof, we will consider the set $\mathcal{C}^{\Pi}$ defined above. Each variable gadget and clause gadget contains one covering gadget. Thus, $|\mathcal{C}^{\Pi}| = 3(n + m)$.
	
	Consider first some satisfying truth assignment of $X$. We build a solution set $\mathcal C$ as follows. First, we put all intervals of $\mathcal{C}^{\Pi}$ in $\mathcal{C}$. Then, for each variable $x_i$, if $x_i$ is true, we add intervals $I_{x_i}^0$, $I_{x_i}^1$ and $I_{x_i}^2$ to $\mathcal C$. Otherwise, we add intervals $I_{\overline{x_i}}^0$, $I_{\overline{x_i}}^1$ and $I_{\overline{x_i}}^2$ to $\mathcal C$. Notice that $|\mathcal{C}| = 6n + 3m$. As observed before, it suffices to show that $\mathcal C$ discriminates the point-pair $\{p_{c_j},p'_{c_j}\}$ of each clause gadget $G_c(c_j)$, and the points $p^1_{x_i},\ldots, p^5_{x_i}$ of each variable gadget $G_v(x_i)$. (All other pairs are discriminated by ${\mathcal C}^{\Pi}$.)
	
	Since the assignment is satisfying, each clause $c_j$ contains a true literal $l_i\in\{x_i,\overline{x_i}\}$. Then, one interval of $G_v(x_i)$ is in $\mathcal C$ and discriminates $p_{c_j}$ and $p'_{c_j}$. Furthermore, consider a variable $x_i$. Point $p^1_{x_i}$ is discriminated from $p^2_{x_i},\dots,p^5_{x_i}$ as it is the only one not covered by any of $I_{x_i}^0$, $I_{x_i}^1$, $I_{x_i}^2$, $I_{\overline{x_i}}^0$, $I_{\overline{x_i}}^1$, and $I_{\overline{x_i}}^2$. If $x_i$ is true, $p^2_{x_i}$ is covered by $I_{x_i}^0$; $p^3_{x_i}$ is covered by $I_{x_i}^0$ and $I_{x_i}^1$; $p^4_{x_i}$ is covered by $I_{x_i}^1$; $p^5_{x_i}$ is covered by $I_{x_i}^1$ and $I_{x_i}^2$. If $x_i$ is false, $p^2_{x_i}$ is covered by $I_{\overline{x_i}}^1$; $p^3_{x_i}$ is covered by $I_{\overline{x_i}}^0$ and $I_{\overline{x_i}}^1$; $p^4_{x_i}$ is covered by $I_{\overline{x_i}}^0$, $I_{\overline{x_i}}^1$ and $I_{\overline{x_i}}^2$; $p^5_{x_i}$ is covered by $I_{\overline{x_i}}^1$ and $I_{\overline{x_i}}^2$. Thus, in both cases, the five points are discriminated, and $\mathcal C$ is discriminating, as claimed.

	For the converse, assume that $\mathcal C$ is a discriminating code of $\Gamma(X, C)$ of size $6n+3m$. By Observation~\ref{luv}, ${\mathcal C}^{\Pi} \subseteq \mathcal C$. Thus there are $3n$ intervals of $\mathcal C$ that are not in ${\mathcal C}^{\Pi}$.
	
	First, we show that $\mathcal C\setminus {\mathcal C}^{\Pi}$ contains exactly three intervals of each variable gadget $G_v(x_i)$. Indeed, it cannot contain less than three, otherwise we show that the points $p^1_{x_i},\ldots, p^5_{x_i}$ cannot be discriminated. To see this, note that each consecutive pair $\{p^s_{x_i},p^{s+1}_{x_i}\}$ ($1\leq s\leq 4$) must be discriminated, thus $\mathcal C$ must contain one interval with an endpoint between these two points. There are four such consecutive pairs in $G_v(x_i)$, thus if $\mathcal C\setminus {\mathcal C}^{\Pi}$ contains at most two intervals of $G_v(x_i)$, it must contain $I_{x_i}^0$ and $I_{\overline{x_i}}^0$. But now, the points $p^1_{x_i}$ and $p^5_{x_i}$ are not discriminated, a contradiction.
	
	Let us now show how to construct a truth assignment of $(X,C)$. Notice that at least one of $I_{x_i}^0$ and $I_{\overline{x_i}}^0$ must belong to $\mathcal C$, otherwise some points of $G_v(x_i)$ cannot be discriminated. If $I_{x_i}^0\in \mathcal C$ but $I_{\overline{x_i}}^0\notin C$, then necessarily $I_{x_i}^1\in \mathcal C$ to discriminate $p^2_{x_i}$ and $p^3_{x_i}$, and $I_{x_i}^2\in \mathcal C$ to discriminate $p^4_{x_i}$ and $p^5_{x_i}$. In this case, we set $x_i$ to true. Similarly, if $I_{\overline{x_i}}^0\in \mathcal C$ but $I_{x_i}^0\notin \mathcal C$, then necessarily $I_{\overline{x_i}}^1\in \mathcal C$ to discriminate $p^1_{x_i}$ and $p^2_{x_i}$, and $I_{\overline{x_i}}^2\in \mathcal C$ to discriminate $p^3_{x_i}$ and $p^4_{x_i}$. In this case, we set $x_i$ to false. Finally, if both $I_{x_i}^0$ and $I_{\overline{x_i}}^0$ belong to $\mathcal C$, the third interval of $\mathcal C\setminus {\mathcal C}^{\Pi}$ in $G_v(x_i)$ may be any of the four intervals covering $p^5_{x_i}$. If this third interval is $I_{x_i}^1$ or $I_{x_i}^2$, we set $x_i$ to true; otherwise, we set it to false.
	
	Observe that when we set $x_i$ to true, none of $I_{\overline{x_i}}^1$ and $I_{\overline{x_i}}^2$ belongs to $\mathcal C$; likewise, when we set $x_i$ to false, none of $I_{x_i}^1$ and $I_{x_i}^2$ belongs to $\mathcal C$. Thus, our truth assignment is coherent. As for every clause $c_j$, the point-pair $\{p_{c_j},p'_{c_j}\}$ is discriminated by $\mathcal C$, one interval correspoding to a true literal discriminates it. The obtained assignment is satisfying, completing the proof.\hfill \qed
\end{proof}

\subsection{A $2$-approximation algorithm}	\label{sec:2-approx}
We next design a 2-approximation algorithm for \textsc{Discrete-G-Min-Disc-Code} in 1D by carefully choosing at most $n$ intervals to discriminate the $n$ points of $P$. We remark that this algorithm is substantially simpler than the one from the preliminary version of this paper~\cite{isaac}.

First, we will need the following proposition, already observed in~\cite{GledelP19}.

\begin{proposition}[\cite{GledelP19}]\label{prop:n/2}
Any solution of \textsc{Discrete-G-Min-Disc-Code} and \textsc{Continuous-G-Min-Disc-Code} in 1D for inputs of $n$ points has size at least $\frac{n+1}{2}$.
\end{proposition}
\begin{proof}
  In order to discriminate the consecutive points in $P$, for any feasible solution $SOL$ for $P$, every gap between two consecutive points will contain an end-point of at least one interval in $SOL$. There exist $n-1$ gaps for the $n$ points in $P$. But we must also have intervals covering the first point and the last point, which amounts to $n+1$ positions for the end-points of intervals of $SOL$. Thus, any solution has size at least $(n+1)/2$.\qed
\end{proof}

\begin{theorem} \label{thth2}
	There exists a $2$-factor approximation algorithm solving \textsc{Discrete-G-Min-Disc-Code} in 1D, that runs in $O(m\log m)$ time and $O(m)$ space.
\end{theorem}
\begin{proof}
We assume that the points of $P=\{p_1, \ldots, p_n\}$ of the instance $(P,S)$ are sorted in increasing order with respect to their $x$-coordinate values. At Step~$i$, our partial solution $S_i\subseteq S$ will have the property that it covers and discriminates all the points in $\{p_1,\ldots,p_i\}$ (using at most $i$ intervals). Thus, at step~$n$, $S_n$ is a valid solution for $(P,S)$ of size at most $n$.

For the first step, we let $S_1$ consist of any interval that contains $p_1$. For the next steps, we assume that $i$ iterations have already been executed, and thus we have computed the set $S_{i}$ that discriminates $P_i=\{p_1,\ldots,p_{i}\}$. We now consider the set $P_{i+1}=\{p_1,\ldots,p_{i+1}\}$. We distingish three cases as follows.

\begin{description}
  	\item[Case 1:] The id of $p_{i+1}$ using the intervals in $S_i$ is 	non-null and is different from the id of every point $p \in P_i$. Here, simply let $S_{i+1}=S_i$, that is, $S_i$ is already a feasible solution for $P_{i+1}$.
	\item[Case 2:] The id of $p_{i+1}$ using the intervals in $S_i$ is null, that is, no interval of $S_i$ covers $p_{i+1}$. Here, we choose any arbitrary interval $s \in S$ that covers $p_{i+1}$. Thus, we have $S_{i+1}=S_i \cup \{s\}$. As the members in $P_i$ are already discriminated up to $i$-th iteration, they remain discriminated by $S_{i+1}$ (even if the new interval covers a subset of $P_i$). 
	\item[Case 3:] The id of $p_{i+1}$ using the intervals in $S_i$ is non-null, but is the same as the id of some $p_j \in P_i$ $(j < i)$. Note that, in such a case the id of $p_{i+1}$ can match with at most one element of $P_i$ as the id's of $P_i$ are all distinct with respect to $S_i$. Here, we choose an interval $s$ of $S$ that can discriminate $p_j$ and $p_{i+1}$. Such an interval always exists as we have already checked that $(P,S)$ is twin-free. Thus, $S_{i+1}=S_i \cup \{s\}$ is our valid partial solution.  
\end{description}

As we have inserted at most one interval at each iteration, $|S_n|\leq n$. By Proposition~\ref{prop:n/2}, $|S_n|\leq 2 OPT-1$ and the $2$-approximation factor follows.


We now analyze the time and space complexity. We will use two tree data structures for processing the points in $P$ and the intervals in $S$ efficiently. These are a \emph{height-balanced binary tree} ${\cal T}_H$, and a \emph{priority search tree} ${\cal T}_P$. 

\begin{description}
 \item[${\cal T}_H$:] It is a binary tree in which the depth of the two subtrees of every node does not differ by more than 1~\cite{knuth97}. A height-balanced binary tree with $n$ nodes has height $\Theta(\log n)$. Each operation (lookup, insertion or deletion) takes time $\Theta(\log n)$ in the worst case.
 \item[${\cal T}_P$:] A priority search tree \cite{PrioritySearchTree,Berg} is a hybrid of a priority queue and a binary search tree. It stores a set of 2-dimensional points (a pair of real numbers) for the efficient answering of 1.5-dimensional queries in a one-side open query box of the form $(-\infty,a]\times [b,c]$. In other words, it can report/count the points whose $x$-coordinate is smaller than $a$, and $y$-coordinate lies in the range $[b,c]$. The preprocessing time and space complexities of this data structure are $O(n\log n)$ and $O(n)$ respectively; the time complexity for reporting/counting a 1.5-dimensional query is $O(s+\log n)$/$O(\log n)$, where $s$ is the number of points returned by the search.
\end{description}

Before the start of the algorithm, we compute a  ${\cal T}_P$ data structure with a set of pairs of reals $(\ell(s),r(s))$ corresponding to the segments $S$, where $\ell(s)$ and $r(s)$ denote the coordinates of the left and right end-point of the interval $s\in S$ (on the $x$-axis). The preprocessing time and space required for ${\cal T}_P$ are $O(m\log m)$ and $O(m)$ ($m=|S|$) respectively. Identifying the intervals in $S$ that contains a point $p\in P$ and does not contain a point $q \in P$ is equivalant to a 1.5-dimensional range query with the query box $(-\infty,x(p)] \times (x(p),x(q))$, where $x(p)$ denotes the x-coordinate of the point $p$.

The height-balanced binary tree ${\cal T}_H$ stores the {\it groups} generated after processing the intervals in $S_i$, and is updated at the end of each iteration $i$. A {\em group} is a maximal set of pairwise intersecting intervals of $S_i$. These groups are totally ordered in the sense that a pair of consecutive groups share only their common end-point. Each group contains at most one point of $P_i$. Since $|S_i|\leq i$, and the number of groups created with $|S_i|$ is $2\times |S_i|+1$, the size of the data structure ${\cal T}_H$ for storing the groups during the entire execution is $O(n)$. While processing $p_{i+1}$, this tree structure is used to identify an appropriate interval to cover the point $p_{i+1}$ in $O(\log n)$ time. As a result, we also know which case to follow for discriminating $p_{i+1}$. The cost of  maintenance of ${\cal T}_H$ after each iteration is also $O(\log n)$

If Case~2 happens, we need to identify an interval $s \in S$ that covers $p_{i+1}$, i.e., a segment with $\ell(s) < x(p_{i+1}) < r(s)$, or in other words, a pair of reals $(\ell(s),r(s))$ that lies in the range box $(-\infty,x(p_{i+1})]\times [x(p_{i+1}),\infty)$. 

If Case~3 happens, then we need to choose a segment $s \in S$ satisfying 

either $x(p_j) < \ell(s) < x(p_{i+1}) < r(s)$ i.e., $(\ell(s),r(s))\in (x(p_j),x(p_{i+1})] \times [x(p_{i+1}), \infty)$  

or $\ell(s) < x(p_j) < r(s) < x(p_{i+1})$ i.e., $(\ell(s),r(s))\in(-\infty,x(p_j)]\times [x(p_j),x(p_{i+1}))$. 

As mentioned earlier, in either of the cases such an element can be found in the data structure ${\cal T}_P$ in $O(\log m)$ time. Thus, the only task that remains is to modify the data structure ${\cal T}_H$ after inserting $s=[a,b]\in {\cal T}_H$. Let $a$ and $b$ lie in the groups $g_\alpha=[\theta_1,\theta_2]$ and $g_\beta=[\psi_1,\psi_2]$, respectively. Now, $g_\alpha$ and $g_\beta$ is to be deleted from ${\cal T}_H$ and four intervals $[\theta_1,a]$, $[a,\theta_2]$, $[\psi_1,b]$ and $[b,\psi_2]$ need to be inserted in ${\cal T}_H$. If $=[a,b]$ lies entirely in the same group $[\theta_1,\theta_2]$, then $[\theta_1,\theta_2]$ is split into three groups $[\theta_1,a]$, $[a,b]$ and $[b,\theta_2]$. Surely, we need to attach $p_j$ and $p_{i+1}$ to the appropriate interval groups to which they belong. This needs another $O(\log n)$ time. Thus, processing $p_{i+1}$ requires $O(\log m+\log n)$ time in the worst case. 

The construction of ${\cal T}_P$ needs $O(m\log m)$ time and $O(m)$ space~\cite{Lee04,PrioritySearchTree}. Processing $n$ points requires $O(n(\log n+\log m))$ time as explained in the previously. As $n=O(m)$, the result follows. \hfill\qed
\end{proof}

\subsection{A PTAS for the unit interval case}\label{sec:PTAS}

	We now design a PTAS for the 1D case where all intervals in $S$ have the same length.

	The following observation (which was also made in the related setting of identifying codes of unit interval graphs~\cite[Proposition 5.12]{Foucaud}) plays an important role in designing our PTAS.

	\begin{observation} \label{imp}
		In an instance $(P,S)$ of \textsc{Discrete-G-Min-Disc-Code} in 1D, if the objects in $S$ are intervals of the same length, then discriminating all the pairs of \emph{consecutive} points in $P$ is equivalant to discriminating \emph{all} the pairs of points in $P$.
	\end{observation}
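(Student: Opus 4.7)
The $(\Leftarrow)$ direction is immediate. For the $(\Rightarrow)$ direction, I fix a subset $S'\subseteq S$ that discriminates every consecutive pair and covers every point of $P$, and argue by contradiction that it discriminates all pairs. Let $\ell$ denote the common interval length. If some pair is not discriminated, pick an undiscriminated pair $p_i,p_j$ with $i<j$ and $j-i$ minimum; since consecutive pairs are discriminated, $j-i\geq 2$.

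The key structural fact, which I would prove first, is a monotonicity statement: any interval $I=[a,a+\ell]\in S'$ that contains both $p_i$ and $p_j$ satisfies $a\leq p_i$ and $p_j\leq a+\ell$, hence $[p_i,p_j]\subseteq I$ and so $I$ contains every $p_k$ with $i\leq k\leq j$. Writing $c(p)$ for the code of $p$ with respect to $S'$, this yields $c(p_k)\supseteq c(p_i)=c(p_j)$ for every such $k$.

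I then split into two cases on the distance $p_j - p_i$. If $p_j - p_i > \ell$, no interval of length $\ell$ can cover both $p_i$ and $p_j$, so $c(p_i)\cap c(p_j)=\emptyset$; combined with $c(p_i)=c(p_j)$ this forces both codes to be empty, contradicting the covering condition. If $p_j - p_i\leq \ell$, the discrimination of the consecutive pair $(p_i,p_{i+1})$ together with $c(p_{i+1})\supseteq c(p_i)$ yields an interval $I=[a,a+\ell]\in S'$ with $I\ni p_{i+1}$ and $I\not\ni p_i$, i.e.\ $p_i<a\leq p_{i+1}$. From $p_j-p_i\leq \ell$ we get $a>p_i\geq p_j-\ell$, so $a+\ell>p_j$; combined with $a\leq p_{i+1}\leq p_j$, this gives $p_j\in I$. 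Hence $I\in c(p_j)=c(p_i)$, contradicting $I\not\ni p_i$.

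The main obstacle is the long-distance case $p_j-p_i>\ell$, where monotonicity alone does not give a contradiction; resolving it requires invoking the covering property, which is part of the disc-code assumption. The proof is short, but it genuinely uses the unit-length hypothesis: without it, neither the containment $c(p_k)\supseteq c(p_i)$ nor the distance calculation in the second case survives, which is precisely why the analogous statement fails for intervals of arbitrary length.
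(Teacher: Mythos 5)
Your proof is correct and follows essentially the same route as the paper's: both reduce to the case $p_j-p_i\leq\ell$ (the paper gets this directly from coverage plus equal codes, you via the empty-code contradiction in your first case), and then both observe that the interval discriminating $p_i$ from $p_{i+1}$ must start, rather than end, in the gap between them, whence unit length forces it to reach $p_j$. One quibble with your closing remark: the containment $c(p_k)\supseteq c(p_i)=c(p_j)$ for $i\leq k\leq j$ is just convexity of intervals and holds for arbitrary lengths; only the distance calculation genuinely needs the unit-length hypothesis.
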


	\begin{proof}
		Assume that we have a set $S'\subseteq S$ that covers all points and discriminates all consecutive point-pairs, but two non-consecutive points $p_i$ and $p_j$ ($i<j$) are not discriminated. Since $p_i$ and $p_j$ are covered by the same set of intervals of $S'$ and the intervals are of same (unit) length, they must be at a distance at most~$1$ apart. Now, since they are not consecutive, $p_{i+1}$ lies between $p_i$ and $p_j$. Since $S'$ discriminates $p_i$ and $p_{i+1}$, there is an interval $I\in S'$ with an endpoint in the gap $g_i=[p_i,p_{i+1}]$. If it is the right endpoint, $I$ covers $p_i$ but not $p_j$, a contradiction. Thus, it must be the left endpoint. But since the distance between $p_i$ and $p_j$ is at most~$1$, $I$ contains $p_j$ (but not $p_i$), again a contradiction. \hfill \qed
	\end{proof}

	For a given $\epsilon > 0$, we choose $\lceil\frac{n\epsilon}{4}\rceil$ points, namely $q_1, q_2, \ldots, q_{\lceil\frac{n\epsilon}{4}\rceil} \in P$, called the {\it reference points}, as follows: $q_1$ is the $\lceil\frac{2}{\epsilon}\rceil$-th point of $P$ from the left, and for each $i=1,2, \ldots, \lfloor\frac{n\epsilon}{4}\rfloor$, the number of points in $P$ between every consecutive pair $(q_i, q_{i+1})$ is $\lceil\frac{4}{\epsilon}\rceil$ (both inclusive). The number of points to the right of $q_{\lceil\frac{n\epsilon}{4}\rceil}$ may be less than $\lceil\frac{2}{\epsilon}\rceil$. For each {\it reference point} $q_i$, we choose two intervals $I_i^1, I_i^2 \in S$ such that both $I_i^1, I_i^2$ contain (span) $q_i$,  and the left (resp. right) endpoint of $I_i^1$ (resp. $I_i^2$) have the minimum $x$-coordinate (resp. maximum $x$-coordinate) among all intervals in $S$ that span $q_i$. Observe that all the points in $P$ that lie in the range $G_i=[\ell(I_i^1), r(I_i^2)]$ are \emph{covered}, where $\ell(I_i^1)$ and $r(I_i^2)$ are the $x$-coordinates of the left endpoint of $I_i^1$ and the right endpoint of $I_i^2$, respectively. These ranges will be referred to as \emph{group-ranges}. Since the endpoints of the intervals are distinct, the span of a \emph{group-range} is strictly greater than~1.

	We now define a \emph{block} as follows. Observe that the ranges $G_i$ and $G_{i+1}$ may or may not overlap. If several consecutive ranges $G_i, G_{i+1}, \ldots, G_k$ are pairwise overlapping, then the horizontal range $[\ell(I_i^1), r(I_k^2)]$ forms a block. The region between a pair of consecutive blocks will be referred to as a \emph{free region}. We use $B_1, B_2, \ldots, B_l$ to name the blocks in order, and $F_0, F_1, \ldots,F_l$ to name the free regions (from left to right). The points in each block are covered. Here, the remaining tasks are (i) for each block, choose intervals from $S$ such that consecutive pairs of points in that block are discriminated, and (ii) for each free region, choose intervals from $S$ such that all its points are covered, and the pairs of consecutive points are discriminated.

	\begin{observation}\label{obsv123}
		There exists no interval $I \in S$ that contains both a point in $F_i$ and a point in $F_{i+1}$.
	\end{observation} 
	\begin{proof}
		Note that, $F_i$ and $F_{i+1}$ are sepatated by the block $B_{i+1}$. If there exists an interval $I$ that contains a point in $F_i$ and a point in $F_{i+1}$, then $I$ will contain the point $q_j\in B_{i+1}$ just to the right of $F_i$, which is the reference point of the leftmost group-range $G_j$ of the block $B_{i+1}$. This contradicts the existence of $I \in S$. Also, the size of $I$ then has to be greater than one, which is impossible. \hfill \qed 
	\end{proof}
	Thus, the discriminating code for a free region $F_i$ is disjoint from that of its neighboring free region $F_{i+1}$. So, we can process the free regions independently. 

	{\bf Processing of a free region:} Let the neighboring group-ranges of a free region $F_i$ be $G_a$ and $G_{a+1}$, respectively. There are at most $\frac{4}{\epsilon}$ points lying between the reference points of $G_a$ and $G_{a+1}$. Among these, several points of $P$ to the right (resp. left) of the reference point of $G_a$ (resp. $G_{a+1}$) are inside \emph{block} $B_i$ (resp. $B_{i+1}$). Thus, there are at most $\frac{4}{\epsilon}$ points in $F_i$. Let $S_{F_i} \subseteq S$ be a set whose intervals cover at least one point of $F_i$. Note that, though we have deleted all the redundant intervals of $S$, there may exist several intervals in $S$  whose one endpoint lies in a gap inside that free region, and their other endpoint lies in distinct gaps of the neighboring block. In Figure~\ref{ptasfig1}, there are some blue intervals which are redundant with respect to the points $F_i\cap P$, but are non-redundant with respect to the whole point set $P$. However, the number of such intervals is at most $\frac{4}{\epsilon}$ due to the definition of $(I_i^1, I_i^2)$ of the right-most group-range of the block $B_i$ and left-most group-range of the block $B_{i+1}$.  

	\begin{figure}[!ht]
		\begin{center}
			\includegraphics[scale=0.65]{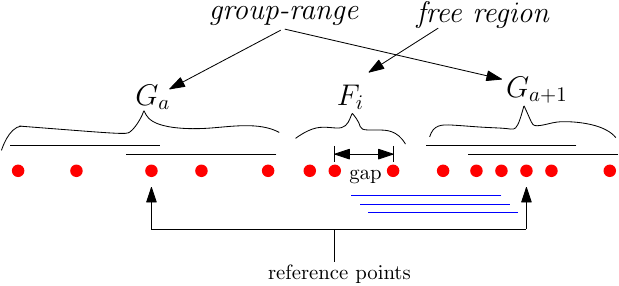}
			\caption{Demonstration of redundant edges in a free region which are non-redundant in the problem instance $(P,S)$.} 
			\label{ptasfig1}
		\end{center}
	\end{figure}

	Thus, we have $|S_{F_i}| = O(1/\epsilon^2)$. We consider all possible subsets of intervals of $S_{F_i}$, and test each of them for being a discriminating code for the points in $F_i$. Let ${\cal D}_i$ be all possible different discriminating codes of the points in $F_i$, with $|{\cal D}_i| =2^{O(1/\epsilon^2)}$ in the worst case. 

	{\bf Processing of a block:} Consider a block $B_i$; its neighboring free regions are $F_i$ and $F_{i+1}$. Consider two discriminating codes $d \in {\cal D}_i$ and $d' \in {\cal D}_{i+1}$. We create a graph $G_i=(V_i,E_i)$ whose nodes $V_i$ correspond to the gaps of $B_i$ which are not discriminated by the intervals used in ${\cal D}_i$ and ${\cal D}_{i+1}$. Each edge $e \in E_i$ corresponds to an interval in $S$ that discriminates pairs of consecutive points corresponding to two different nodes (gaps) of $V_i$. Now, we can discriminate each non-discriminated pair of consecutive points in $B_i$ by computing a minimum edge-cover of $G_i$ in $O(|V_i|^2)$ time~\cite{Vazi}. As mentioned earlier, all the points in $B_i$ are covered. Thus, the discrimination process for the block $B_i$ is over. We will use $\theta(d,d')$ to denote the size of a minimum edge-cover of $B_i$ using $d \in {\cal D}_i$ and $d'\in {\cal D}_{i+1}$.   

	{\bf Computing a discriminating code for $P$:} We now create a multipartite directed graph $H=({\cal D},{\cal F})$. Its $i$-th partite set  corresponds to the discriminating codes in ${\cal D}_i$, and ${\cal D}=\cup_{i=0}^l {\cal D}_i$. Each node $d \in \cal D$ has its weight equal to the size of the discriminating code $d$. A directed edge $(d,d') \in {\cal F}$ connects two nodes $d$ and $d'$ of two adjacent partite sets, say $d \in {\cal D}_i$ and $d' \in {\cal D}_{i+1}$, and has its weight equal to $\theta(d,d')$. For every pair of partite sets ${\cal D}_i$ and ${\cal D}_{i+1}$,  we connect every pair of nodes $(d,d')$, $d \in {\cal D}_i$ and $d' \in {\cal D}_{i+1}$, where $i =0,1,\ldots, l-1$. Every node of ${\cal D}_0$ is connected to a node $s$ with weight~0, and every node of ${\cal D}_l$ is connected to a node $t$ with weight~0.

	\begin{lemma} \label{x}
		The minimum weight $s$-$t$ path in $H$ is a lower bound on the size of the optimum discriminating code for $(P,S)$, where the weight of a path  is equal to the sum of costs of all the vertices and edges on that path.
	\end{lemma}

	\begin{proof} 
		Let $\Pi$ be the minimum weight $s$-$t$ path in the graph $H$, which corresponds to a set of intervals $S'\subseteq S$. To show, $|S'| \leq |S_{opt}|$, where $S_{opt} \subseteq S$ corresponds to the minimum weight discriminating code. For a contradiction, let $|S'| > |S_{opt}|$. As $S_{opt}$ is a discriminating code, the points of every \emph{free region} $F_i$ are discriminated by a subset, say $\delta_i \in S$. Since, we maintain all the discriminating codes in ${\cal D}_i$, surely the subset $\delta_i \in {\cal D}_i$. Let $b_i\subset S$ be the set of intervals that span the points of the block $B_i$. As  $S_{opt}$ is a discriminating code, the points in $B_i$ are discriminated by the intervals in $b_i\cup \delta_i \cup \delta_{i+1}$. Thus, the set of intervals $\beta_i = b_i \setminus (\delta_i \cup \delta_{i+1})$ discriminate the pair of points of $B_i$ that are not discriminated by $\delta_i \cup \delta_{i+1}$. Observe that, for every $i=0,1, \ldots, l$, we have $\delta_i \in {\cal D}_i$. Moreover, there exists a  path $\Pi_{opt}$ that connects $\delta_i, i=0,1,\ldots, l$, whose each edge $(\delta_i,\delta_{i+1})$ has cost equal to $|\beta_i|$. Thus, we have the contradiction that $\Pi_{opt}$ is a path in $H$ having cost less than that of $\Pi$. \hfill \qed
	\end{proof}

	The set of intervals may not form a discriminating code for $P$, as the points in a block may not all be covered. However, the additional intervals $\{(I_i^1,I_i^2), i=1,2, \ldots,\lceil\frac{n\epsilon}{2}\rceil\}$ ensure that the optimum size of the discriminating code satisfies $S_{opt} \geq \lceil\frac{n+1}{2}\rceil$ due to the fact that we have $(n+1)$ gaps, and each interval in $S$ covers exactly 2 gaps. This fact, along with Lemma~\ref{x} implies:

	\begin{lemma}\label{y}
		$|SOL| \leq (1+\epsilon)S_{opt}$.
	\end{lemma}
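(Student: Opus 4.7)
The plan is a straightforward accounting argument, combining the upper bound on $|SOL|$ from the construction with the two lower bounds on $OPT$ already at hand (Lemma~\ref{x} and the trivial gap-counting bound $OPT \geq \lceil (n+1)/2 \rceil$).

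First, I would bound $|SOL|$ directly from its definition. Since $S'$ is obtained from a shortest $s$-$t$ path in $H$, and we augment it with the two intervals $I_i^1, I_i^2$ per reference point to ensure every point in every block is covered, we have
\begin{equation*}
|SOL| \;\leq\; |S'| \;+\; 2\Bigl\lceil\tfrac{n\epsilon}{4}\Bigr\rceil \;\leq\; |S'| \;+\; \tfrac{n\epsilon}{2} \;+\; 2.
\end{equation*}

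Next, I would invoke Lemma~\ref{x}, which gives $|S'| \leq OPT$, and combine with $OPT \geq \lceil (n+1)/2 \rceil$ to get $n \leq 2\, OPT - 1$. Substituting,
\begin{equation*}
|SOL| \;\leq\; OPT \;+\; \tfrac{\epsilon}{2}(2\, OPT - 1) \;+\; 2 \;\leq\; (1+\epsilon)\, OPT \;+\; 2.
\end{equation*}
The small additive slack of $+2$ coming from the ceiling is harmless: one can either absorb it by assuming $n$ is larger than a constant depending on $\epsilon$ (and otherwise solving the problem exactly in constant time), or more cleanly by running the scheme with $\epsilon' = \epsilon/2$ so that the additive term is dominated by $\tfrac{\epsilon}{2} OPT$ for all sufficiently large instances, still yielding a $(1+\epsilon)$-approximation overall.

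The main (and essentially only) obstacle is precisely this ceiling/constant management; the substantive work has already been done in Lemma~\ref{x} and in setting up the block/free-region decomposition so that the augmentation by the $(I_i^1, I_i^2)$ pairs suffices to upgrade $S'$ into a genuine discriminating code. The rest is arithmetic, and no further structural argument is required.
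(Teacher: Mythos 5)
Your proof is correct and follows essentially the same route as the paper's: bound the extra covering intervals by roughly $\tfrac{n\epsilon}{2}$, use Lemma~\ref{x} to get $|S'|\leq OPT$, and use $OPT\geq\lceil\tfrac{n+1}{2}\rceil$ to absorb the $\tfrac{n\epsilon}{2}$ term into $\epsilon\,OPT$. You are in fact slightly more careful than the paper, which silently drops the $O(1)$ additive slack from the ceilings that you explicitly handle by rescaling $\epsilon$ or brute-forcing small instances.
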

	\begin{proof} 
		By Lemma~\ref{x}, $|S'| \leq S_{opt}$. The number of extra intervals to cover the \emph{block}s is $\frac{n\epsilon}{2}$. Again, $\frac{n}{2} \leq MEC(P) \leq S_{opt}$, where $MEC(P)$ is the size of minimum edge-cover of the graph $G$ created with the points in $P$ and the intervals in $S$. Thus, $|SOL| \leq (1+\epsilon) S_{opt}$. \hfill \qed
	\end{proof}

	We now analyze the time complexity of the algorithm. Note that, the number of possible discriminating codes in a free region is $2^{O(1/\epsilon^2)}$. Thus, in the graph $H$, the number of edges between a pair of consecutive partite sets ${\cal D}_i$ and ${\cal D}_{i+1}$ is $|{\cal D}_i|\times |{\cal D}_{i+1}| = 2^{O(1/\epsilon^2)}$.  As the computation of the cost of an edge between the sets ${\cal D}_i$ and ${\cal D}_{i+1}$ invokes the edge-cover algorithm of an undirected graph, it needs $O(|B_i|^{2})$ time~\cite{Vazi}. Thus, the total running time of the algorithm is $A+B$, where $A$ is the total time of generating the edge costs, and $B$ is the time for computing a shortest path of $H$. We have $A\leq \sum_{i=1}^{\lceil\frac{n\epsilon}{4}\rceil} 2^{O(1/\epsilon^2)} \times O(|B_i|^2)$. As the $B_i$'s are mutually disjoint, we get $A=O(n^2\times 2^{O(1/\epsilon^2)})$. Moreover, $B=O(|{\cal F}|)=O(\frac{n}{\epsilon}\times 2^{O(1/\epsilon^2)})$~\cite{T99}. 

	In order to reduce the space requirement of the algorithm, we generate partite sets of the multipartite graph $H$ one by one, and compute the length of the shortest path from $s$ up to each node of that set. Initially, the length of the path up to a node $d \in {\cal D}_0$ is $|d|$. While generating ${\cal D}_{i+1}$, the nodes in ${\cal D}_i$ are available along with the length of the shortest path $\chi(d)$ up to each node $d\in {\cal D}_i$ from $s$. Now, we execute the following steps: 
	\begin{description}
		\item[Step 1:] We generate the nodes of ${\cal D}_{i+1}$, and initialize their cost $\chi(.)$ with $\infty$. 
		\item[Step 2:] For each pair of nodes $(d,d'), d \in {\cal D}_i, d' \in {\cal D}_{i+1}$, do the following: 
		\begin{itemize}
			\item Compute the edge cost $\theta(d,d')$, which is the size of the edge-cover of the block $B_i$ using the discriminating codes $d$ of the \emph{free region} $F_i$ and $d'$ of the \emph{free region} $F_{i+1}$. This needs $O(|B_i|^2)$ time using the matching algorithm of an undirected graph~\cite{Vazi}. 
			\item Compute the length of the shortest path from $s$ to $d'$ using the edge $(d,d')$, which is $\chi^*=\chi(d)+\theta(d,d')+|d'|$. 
			\item If the computed length is less than the existing value of $\chi(d')$, then update $\chi(d')$ with $\chi^*$.
		\end{itemize}
	\end{description}

	As the number of discriminating codes in each partite set is $2^{O(1/\epsilon^2)}$ in the worst case which are computed online while considering the $(i+1)$-th partite set, and each discriminating code is of length at most $O(\frac{1}{\epsilon})$, we have the following result.

	\begin{theorem} \label{thm:PTAS}
		The \textsc{Discrete-G-Min-Disc-Code} problem in 1D for unit interval objects admits a PTAS: for every $\epsilon>0$, there is a $(1+\epsilon)$-factor approximation algorithm with time complexity $2^{O(1/\epsilon^2)}n^2$ using $\frac{1}{\epsilon} 2^{O(1/\epsilon^2)}n^2$ space.
	\end{theorem}
	
	Moreover, in this unit interval setting, we easily reduce an instance of \textsc{Continuous-G-Min-Disc-Code} problem to an instance of \textsc{Discrete-G-Min-Disc-Code} problem by first computing the $O(n^2)$ possible non-redundant unit intervals. Thus:
	
	\begin{corollary}\label{cor:PTAS}
		The \textsc{Continuous-G-Min-Disc-Code} problem in 1D for unit interval objects has a PTAS with the same approximation factor, time and space complexity as those for \textsc{Discrete-G-Min-Disc-Code}.
	\end{corollary}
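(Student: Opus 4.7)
The plan is to deduce the corollary from Theorem~\ref{thm:PTAS} via a polynomial-time, value-preserving reduction from the continuous problem to its discrete counterpart. Given an instance of \textsc{Continuous-G-Min-Disc-Code} consisting only of a point set $P=\{p_1,\dots,p_n\}$ on the line, I would construct a finite set $S$ of unit intervals so that $(P,S)$ is an equivalent instance of \textsc{Discrete-G-Min-Disc-Code}, then invoke the PTAS of the theorem on $(P,S)$.

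To build $S$, observe that as the left endpoint $a$ of a unit interval $[a,a+1]$ slides along $\mathbb{R}$, the covered set $P\cap[a,a+1]$ only changes when $a$ crosses some $p_i$ or when $a+1$ crosses some $p_j$. This yields at most $2n$ breakpoints and hence at most $2n+1$ distinct subsets of $P$ realizable as $P\cap I$ for a unit interval $I$. For each such subset I place into $S$ a single canonical representative, say the unit interval whose left endpoint sits at the midpoint of the corresponding open subinterval of consecutive breakpoints. Sorting the breakpoints and emitting the representatives takes $O(n\log n)$ time, and $|S|=O(n)$ (this is in fact sharper than the $O(n^2)$ bound mentioned informally in the paper, but either suffices to keep the overall running time within $2^{O(1/\epsilon^2)}n^2$).

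Correctness rests on the fact that both the covering condition ($S^*_i\neq\emptyset$) and the discrimination condition ($S^*_i\neq S^*_j$) depend only on which points of $P$ are contained in each chosen interval. Thus any continuous solution $\mathcal{I}^*$ can be turned into a discrete solution on $(P,S)$ of size at most $|\mathcal{I}^*|$ by replacing every $I\in\mathcal{I}^*$ by the canonical representative realizing the pattern $P\cap I$ (coincident replacements are merged, which only decreases the size); conversely every discrete solution on $(P,S)$ is already a valid continuous solution of the same size. Hence the two optima coincide, and running the PTAS of Theorem~\ref{thm:PTAS} on $(P,S)$ yields a $(1+\epsilon)$-approximation of the continuous optimum within time $2^{O(1/\epsilon^2)}n^2$, matching the stated bounds. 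The only subtlety is verifying that the replacement step preserves both the code of each point and the size inequality, but this is immediate from the pattern-equivalence of canonical representatives; no further obstacle arises.
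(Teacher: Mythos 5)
Your overall route is exactly the paper's: the paper dispatches this corollary in one sentence, reducing \textsc{Continuous-G-Min-Disc-Code} to \textsc{Discrete-G-Min-Disc-Code} by enumerating the ``non-redundant'' unit intervals (one per coverage pattern) and invoking Theorem~\ref{thm:PTAS}. Your pattern-equivalence argument (both conditions depend only on $P\cap I$; replace each chosen interval by the canonical representative of its pattern, merging duplicates; conversely any discrete solution is a continuous one) correctly fills in what the paper leaves implicit, and your $O(n)$ count of patterns is in fact sharper than the $O(n^2)$ the paper quotes. However, there is one concrete flaw in your construction of $S$.

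By placing representatives only at midpoints of the \emph{open} subintervals between breakpoints, you miss patterns that are realized only \emph{at} a breakpoint, namely when the two endpoints of a closed unit interval coincide with points of $P$, i.e., when $P$ contains two points at distance exactly~$1$: the interval $[p,p+1]$ covers both, but every interval $[a,a+1]$ with $a$ strictly between breakpoints covers at most one of them. This breaks your claim that ``the two optima coincide.'' Concretely, for $P=\{0,1,2\}$ the continuous optimum is $2$ (take $[0,1]$ and $[1,2]$: codes $\{I_1\}$, $\{I_1,I_2\}$, $\{I_2\}$), whereas your $S$ realizes only the singleton patterns $\{0\},\{1\},\{2\}$, forcing a discrete optimum of $3$; the resulting ratio $3/2$ exceeds $1+\epsilon$ for small $\epsilon$. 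The repair is easy and keeps everything else of your argument verbatim: also add one canonical representative for each breakpoint value of $a$ (at most $2n$ further intervals, e.g., $[p_i,p_i+1]$ and $[p_j-1,p_j]$ for each $p_i,p_j\in P$), after which every realizable pattern has a representative in $S$. (Alternatively, under a general-position assumption that no two points of $P$ are at distance exactly~$1$ --- in the spirit of the distinct-endpoints assumption the paper's PTAS already makes --- your construction is correct as written, but the corollary as stated does not grant you that assumption.)
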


\section{The \textsc{G-Min-Disc-Code} problem in 2D} \label{sec:2D}

	Here, the point set $P=\{p_1,p_2, \ldots, p_n\}$ is given in $\mathbb{R}^2$, and the shape of allowed objects used for covering and discriminating the points of $P$ are axis-parallel squares of equal size. We will use the term {\it unit square} to refer to these objects.

\subsection{NP-completeness}\label{sec:NP-c-2D}

	In~\cite{GledelP19}, it has been shown that \textsc{Continuous-G-Min-Disc-Code} for unit disks in 2D is NP-complete. They reduced the \textsc{$P_3$-Partition-Grid} problem, stated below, to \textsc{Continuous-G-Min-Disc-Code} for unit disks in 2D. We will modify their reduction and apply it to \textsc{Continuous-G-Min-Disc-Code} for axis-parallel unit squares in 2D.
	
	A \emph{grid graph} is a graph whose vertices are positioned in $\mathbb{Z}^2$, and a pair of vertices are adjacent if they are at Euclidean distance~1~\cite{P3P}.
	
	\Pb{\textsc{$P_3$-Partition-Grid} \cite{P3P}}
	{A grid graph $G$.}
	{A partition of the vertices of $G$ into disjoint $P_3$-paths, where a $P_3$-path is a path with three vertices.}
	
	\begin{figure}[ht!]
		\begin{center}
			\includegraphics[scale=0.65]{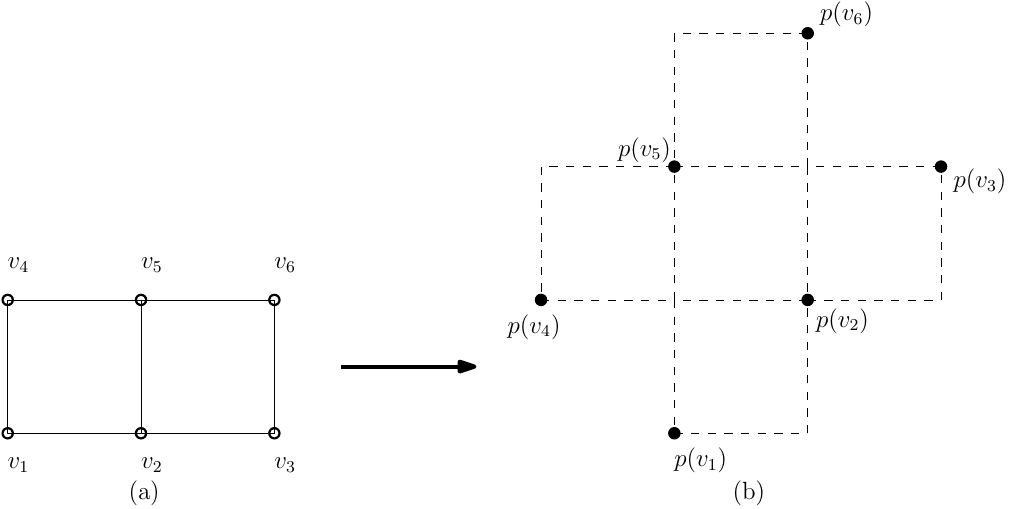}
			\caption{(a) A grid graph G. (b) Its corresponding geometric instance 
			$P_G$, where the dashed axis-parallel unit squares are those covering 
			two points each.} 
			\label{p3}
		\end{center}
	\end{figure}
	
	Given an instance $G$ of \textsc{$P_3$-Partition-Grid}, we construct an instance $P_G$ (a point set) for \textsc{Continuous-G-Min-Disc-Code} as follows. For every vertex $v$ of $G$ with coordinates $(x,y)$, we create a point $p(v)$ with coordinates $(x,y)$ and add it to $P_G$. The construction from~\cite{GledelP19} stops here, and we will now slightly change it. For each point $p(v)$ with coordinates $(x,y)$, we replace it by a point with coordinates $(y-x,y+x)$, that is, we rotate the whole point set by an angle of $\pi/4$ and stretch it by a factor of $\sqrt{2}$ (See Figure~\ref{p3} for an illustration).


	\begin{lemma}\label{lemlem}
		A $P_3$-partition for $G=(V,E)$ exists if and only if there exists a set of $\frac{2|V|}{3}$ axis-parallel unit squares discriminating the points in $P_G$.
	\end{lemma}	
	\begin{proof}
		The key idea is to notice that any axis-parallel unit square can contain at most two points of $P_G$, and if it contains two, then it contains two points corresponding to vertices of $G$ joined by an edge (the center of the square is then placed at the middle-most position of the line segment joining the two points). Moreover, any two points corresponding to an edge of $G$ can be covered by some axis-parallel unit square in that way. Three points corresponding to the three vertices of a $P_3$-path $v_1v_2v_3$ in $G$ can be discriminated using two unit squares $s$ and $s'$, centered at the mid-points of the two segments joining $(p(v_1),p(v_2))$ and $(p(v_2),p(v_3))$, respectively. Now, $p(v_1)$ is covered by $s$ only, $p(v_3)$ by $s'$ only, and $p(v_2)$ by both. Thus, if a $P_3$-partition of $G$ exists, we have our solution of size $\frac{2|V|}{3}$ to the \textsc{Continuous-G-Min-Disc-Code} problem.
		
		Conversely, assume that we have $\frac{2|V|}{3}$ axis-parallel unit squares that discriminate all points of $P_G$. Recall that every square can cover at most two points. For any square $s$ covering two points $p(v_1)$, $p_(v_2)$, we necessarily have that $v_1v_2$ is an edge in $G$. Moreover, one of $p(v_1)$, $p_(v_2)$ needs to be covered by a second square $s'$ (so that the two points are discriminated). Thus, any solution needs at least $\frac{2|V|}{3}$ squares, and any solution of exactly this size will consist of disjoint sets of three points covered by two squares (one point covered by both squares, and the other two, by one of the squares each). These three points must correspond to three vertices of $G$ forming a $P_3$. Thus, we obtain our $P_3$-partition of $G$, as claimed. \hfill \qed
	\end{proof}
		
	Lemma~\ref{lemlem} leads to the following result:		
	\begin{theorem}\label{unitsquares-NPc}
		\textsc{Continuous-G-Min-Disc-Code} and \textsc{Discrete-G-Min-Disc-Code} for axis-parallel unit squares in 2D are NP-complete.
	\end{theorem}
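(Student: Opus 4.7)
The plan is to prove NP-membership for both versions and then give a single reduction from \textsc{$P_3$-Partition-Grid} that simultaneously establishes hardness of both versions. Membership in NP is immediate for \textsc{Discrete-G-Min-Disc-Code}: a candidate subset $S^*$ can be verified to be a discriminating code in polynomial time (as in Proposition~\ref{twin-free}). For \textsc{Continuous-G-Min-Disc-Code}, I would first argue that there is always an optimum solution in which every unit square is in a \emph{canonical} position, namely one whose left edge and bottom edge are each forced by either a point of $P$ lying on that edge or by the boundary of the instance's bounding box. Since translating a unit square without changing the subset of $P$ it covers does not change the quality of the solution, every square in an optimum can be shifted to such a canonical position; this yields at most $O(n^2)$ candidate squares, and the problem is in NP (indeed, it even lies in P for fixed solution size).

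For hardness, I would adapt the reduction of~\cite{GledelP19} that goes from \textsc{$P_3$-Partition-Grid} to the unit-disk continuous discriminating code problem. The key geometric property exploited there is that, at a carefully tuned radius, a disk can simultaneously cover three of the constructed points only if these points correspond to a $P_3$ in the grid graph. The analogous fact I would use for axis-parallel unit squares is the following: if we place the points of each gadget on a scaled copy of the underlying grid graph (say, with grid spacing slightly less than $1/2$), then a single axis-parallel unit square can cover three distinct gadget points only when those three points form a horizontal, vertical, or L-shaped triple that corresponds to a $P_3$ in the grid. The variable/clause-style auxiliary points of the original reduction (the ones that force coverage/discrimination equations) can be transplanted verbatim: they continue to work because their purpose is purely combinatorial, governed by which points are covered together.

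With the reduction tuned to unit squares, equivalence with a $P_3$-partition goes through exactly as in~\cite{GledelP19}: a $P_3$-partition yields a small discriminating code by picking, for each $P_3$, the corresponding canonical unit square, and conversely, any small discriminating code is forced (by the geometric ``at most three points per square'' bound and a counting argument on covered/discriminated pairs) to consist of squares each covering a $P_3$, giving back a partition. For the discrete version I would take as input object set $S$ precisely the polynomial family of canonical unit squares described above; because any continuous solution can be shifted into canonical position, the discrete and continuous optima for the constructed instance coincide, so hardness transfers to \textsc{Discrete-G-Min-Disc-Code} with the same reduction.

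The main obstacle I expect is the gadget-level verification: one must choose the grid spacing, the precise offsets of the auxiliary points, and the overall scale so that (i) no axis-parallel unit square can \emph{accidentally} cover four points, or cover three points that do not form a legal $P_3$, and (ii) unit squares that are genuinely needed for discrimination cannot be ``saved'' by cleverly using the right-angled corners of squares to cover unintended combinations of points (a degree of freedom that disks do not have). I would handle this by choosing the grid spacing strictly less than $1/2$ while placing auxiliary points sufficiently far from the grid points so that, up to translation, only the intended triples fit inside a $1 \times 1$ axis-aligned box; a short case analysis on the relative positions of three points inside such a box should suffice.
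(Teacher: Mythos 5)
Your NP-membership argument is fine, and you correctly identify \textsc{$P_3$-Partition-Grid} as the source problem, but the geometric invariant at the heart of your reduction is the wrong one, and it breaks the argument in both directions. The reduction (both for disks in \cite{GledelP19} and for unit squares in the paper) does \emph{not} rely on one object covering the three points of a $P_3$; it relies on every object covering \emph{at most two} points of the construction, and covering two only if they correspond to an edge of $G$. Each $P_3$-path $v_1v_2v_3$ is then handled by \emph{two} squares, one per edge, giving $p(v_1)$ the code $\{S\}$, $p(v_2)$ the code $\{S,S'\}$ and $p(v_3)$ the code $\{S'\}$; this is what makes the target value $\tfrac{2|V|}{3}$ correct and what drives the counting argument in the converse direction. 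Under your scheme, a single square covering all three points of a $P_3$ assigns the same code to all three, so they are not discriminated at all, and your forward direction already fails.

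The geometric tuning you propose also cannot be realized on an axis-aligned grid: for a unit square to contain three collinear points at spacing $s$ you need $2s\le 1$, but then a $2\times 2$ block of grid points spans at most $1/2$ in each coordinate, so four points fit in one square, violating your requirement (i); there is no spacing that admits the legal triples while excluding such blocks. The paper's fix is to rotate the grid by $\pi/4$ and scale by $\sqrt{2}$, so that $p(v)=(x-y,x+y)$; then two points lie in a common unit square iff their $L_\infty$-distance is at most $1$, iff the corresponding vertices are adjacent in $G$, and since grid graphs are triangle-free no square can contain three points. Note also that the \textsc{$P_3$-Partition-Grid} reduction has no ``variable/clause-style auxiliary points'' to transplant --- the constructed instance consists solely of one point per vertex --- so that part of your plan has no counterpart. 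Your treatment of the discrete version (restricting $S$ to the polynomially many squares that cover two points) does match the paper once the rest is repaired.
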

	\begin{proof}
		The statement follows directly from Lemma~\ref{lemlem} in the case of \textsc{Continuous-G-Min-Disc-Code}. Let $S_G$ contain the set of all axis-parallel unit squares that cover two points of $P_G$. For \textsc{Discrete-G-Min-Disc-Code}, we can simply modify the reduction by creating the instance $(P_G,S_G)$ from $G$. \hfill\qed
	\end{proof}

\subsection{Approximation algorithms}\label{appxAlgo}

	We formulate an approximation algorithm by extending the ideas for the 1D case, described in Section~\ref{sec:2-approx}. We will use the techniques of rounding some suitable Integer Linear Programmes (ILPs). Here, our goal is to choose a set $Q$ of points in $\mathbb{R}^2$ of minimum cardinality such that (i) every point of $P$ is covered by at least one axis-parallel unit square among those centered at the points in $Q$ ({\em covering condition}) and (ii) for every pair of points $p_i,p_j \in P$ ($i\neq j$), there exists at least one square in $Q$ whose boundary intersects the interior of the line segment $[p_i, p_j]$ exactly once ({\em discrimination condition}). 

	\begin{figure}[!ht]
		\centering
		\includegraphics[scale=0.65]{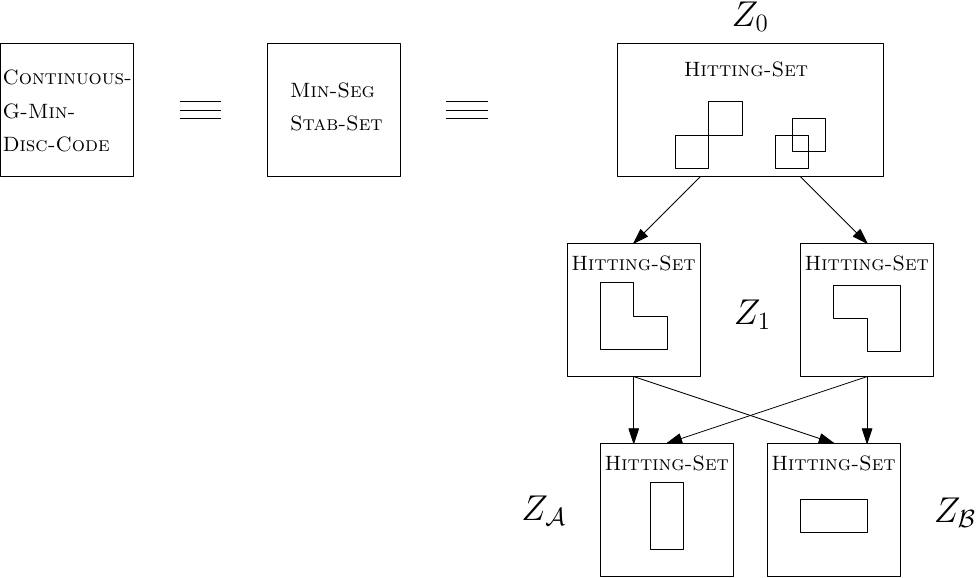}
		\caption{Schematic of the breaking-up of the problem.}
		\label{fig:schema}
	\end{figure}

	We transform our \textsc{Continous-G-Min-Dic-Code} problem into an equivalent problem of segment stabbing (which will be defined below). The segment stabbing problem can also be seen as a hitting set problem of a pair of shapes, called L-shapes. Each of these L-shape hitting set problems is further split into two hitting set problems of unit-height rectangles (or unit-width rectangles). A schematic representation of this process is shown in Figure~\ref{fig:schema}.

	We define the set of line segments $L(P) = \{[p_i, p_j] ~\text{for all}~ p_i,p_j \in P, i\neq j\}$. Thus, the discrimination condition leads to the following problem (where by {\em stabbing} a line segment $\ell$ by a unit square $s$, we mean that exactly one end-point of $\ell$ lies inside $s$).

	\Pb{\textsc{Minimum Segment-Stabbing Set (Min-Seg-Stab-Set)}}
	{A set $L$ of segments in 2D.}
	{A minimum-size set $S$ of axis-parallel unit squares in 2D such that each segment is stabbed by some square of $S$.}

	In fact, \textsc{Min-Seg-Stab-Set} for the input segments $L(P)$ is equivalent to the \textsc{Test Cover} problem for $P$ using axis-parallel unit squares as tests. As in the edge-cover formulation of \textsc{Discrete-G-Min-Disc-Code} problem in 1D (see Section~\ref{sec:2-approx}), here also a feasible solution of \textsc{Min-Seg-Stab-Set} ensures the following:

	\begin{observation} \label{test-C} 
		Every feasible solution $\Phi$ of \textsc{Min-Seg-Stab-Set} (a) discriminates every point-pair in $P$, and (b) at most one point is not covered by any square in $\Phi$.
	\end{observation}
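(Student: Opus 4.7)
The plan is to unfold both parts directly from the single-intersection definition of stabbing. Recall that a square $s$ stabs a segment $\ell \in L$ iff \emph{exactly one} endpoint of $\ell$ lies in $s$ (equivalently, the boundary of $s$ crosses the relative interior of $\ell$ exactly once). Since $L(P)$ contains all pairs $\overline{p_ip_j}$ with $i\neq j$, any feasible $\Phi$ must provide, for every such pair, at least one square that sees exactly one of the two endpoints.

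For part (a), I would fix an arbitrary pair $p_i,p_j\in P$ with $i\neq j$, pick a stabber $s\in\Phi$ of $\overline{p_ip_j}\in L(P)$, and note that by the stabbing definition exactly one of $p_i,p_j$ is contained in $s$. The code of $p_i$ induced by $\Phi$ therefore differs from that of $p_j$ in the coordinate corresponding to $s$, which is precisely the discrimination condition.

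For part (b), I would argue by contradiction. Suppose two distinct points $p_i,p_j\in P$ are simultaneously uncovered by $\Phi$. The segment $\overline{p_ip_j}\in L(P)$ must still be stabbed by some $s\in\Phi$, but having both endpoints outside $s$ contradicts the ``exactly one endpoint in $s$'' requirement of a stabber. Hence no two points of $P$ can be uncovered at the same time, so at most one point escapes coverage.

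There is no real obstacle here: both parts are essentially one-line consequences of the way \textsc{Segment-Stabbing} is defined on $L(P)$. The only thing worth emphasising in the write-up is that the definition of ``stab'' we adopted (exactly one endpoint inside the square, rather than merely intersecting the segment) is exactly what forces both the discrimination guarantee in (a) and the at-most-one-uncovered guarantee in (b); this is also what motivates the slight patch, performed later in the section, of adding one extra square to cover the possibly leftover point.
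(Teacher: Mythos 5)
Your proposal is correct and follows exactly the reasoning the paper relies on (the paper states the observation with only a one-sentence remark, namely that stabbing each segment of $L(P)$ discriminates its two endpoints while possibly leaving one point uncovered): part (a) is the ``exactly one endpoint inside the square'' definition of stabbing applied to every pair, and part (b) is the contrapositive observation that two simultaneously uncovered points would leave their segment unstabbed. Nothing is missing.
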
    

	In order to discriminate the two endpoints of a member $\ell=[a,b] \in L(P)$, we need to consider the two cases:  $length(\ell) \geq 1$ and  $length(\ell) < 1$, where $length(\ell)$ denotes the length of $\ell$. In the former case, if a center is chosen in any one of the unit squares $D(a)$ and $D(b)$, the segment $\ell$ is stabbed, where $D(q)$ is the axis parallel unit square centered at a point $q$. However, more generally in the second case, to stab $\ell$, we need to choose a center in the region $(D(a)\setminus D(b)) \cup (D(b) \setminus D(a))$. In Figure~\ref{ff2} the shaded region is the feasible region for placing the center of the unit squares to stab a line segment in $L(P)$.   We define a set of distinct objects $\mathcal{O}$ corresponding to the elements of $L(P)$, where each object corresponds to the feasible region of placing the center of a stabbing square of an element of $L(P)$. Thus, the \textsc{Min-Seg-Stab-Set} problem reduces to a \textsc{Hitting-Set} problem, where the objective is to choose a minimum number of points in $\mathbb{R}^2$, such that each object in ${\mathcal O}$ contains at least one of those chosen points. 

	\begin{figure}[!ht]
		\centering
		\includegraphics[scale=0.5]{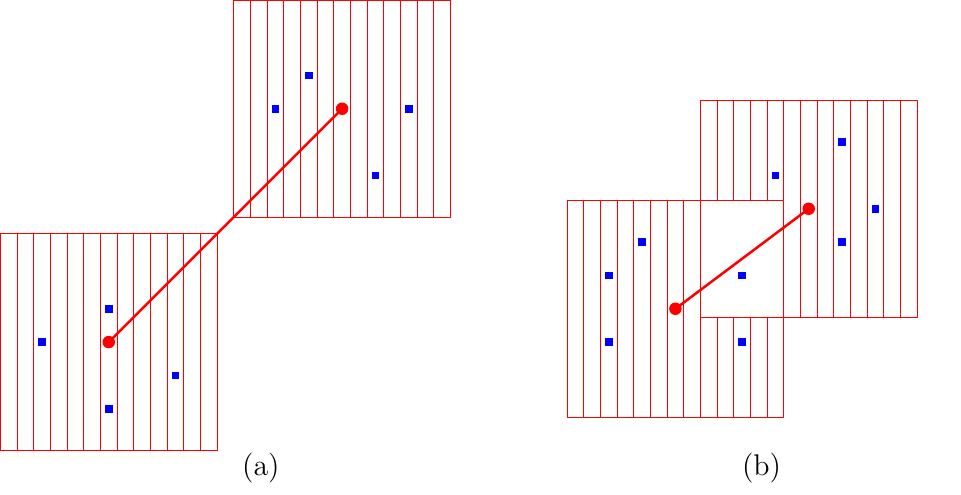}
		\caption{Object that needs to be hit corresponding to segment $\ell=[a,b]$, where (a) $length(\ell) \geq 1$ 		and (b) $length(\ell) < 1$.}
		\label{ff2}
	\end{figure}

	\textbf{The Hitting-Set problem.} We use the technique followed in~\cite{acharya} to solve this problem. Consider the arrangement $\mathcal A$ of the objects in $\mathcal O$. Create a set $Q$ of points by choosing one point in each cell of $\mathcal A$. Thus, the size of the set $Q$ is polynomial in the size of the set $P$. A square centered at a point $q$ inside a cell $A \in {\mathcal A}$ will stab all the segments whose corresponding objects have common intersection region $A$. For each point $q_\alpha \in Q$, we use an indicator variable $x_\alpha$, and can write an integer linear programming (ILP) problem as follows.

	\begin{equation*}\label{eq:ILP}
		\begin{split}
			Z_0: 
			& \min \sum_{\alpha=1}^{|Q|} x_\alpha,\\
			\text{subject to} 
			& ~~\sigma_1(\ell) + \sigma_2(\ell) \geq 1~~ \text{for each segment}~~ \ell=[a,b] \in L(P), \\
			\text{where} 
			& ~~ \sigma_1(\ell)=\sum_{q_\alpha\in Q \cap (D(a)\setminus D(b))} x_\alpha,\\
 			& ~~ \sigma_2(\ell)=\sum_{q_\alpha\in Q \cap (D(b)\setminus D(a))} x_\alpha,\\
			\text{and} 
			& ~~ x_\alpha \in \{0,1\} ~\text{for all points}~ q_\alpha \in Q.
		\end{split}
	\end{equation*}

	As the ILP problem is NP-hard \cite{PS}, we relax the integrality condition of the variables $x_\alpha$ for all $q_\alpha \in Q$ from $Z_0$, and solve the corresponding LP problem in polynomial time. 

	\begin{equation*} \label{eq:LP1}
		\begin{split}
			\overline{Z}_0: 
			& \min \sum_{\alpha=1}^{|Q|} x_\alpha\\ 
			\text{subject to}~ 
			& \sigma_1(\ell) + \sigma_2(\ell) \geq 1 ~\forall~ \ell=[a,b] \in L(P), \\
			~~ \text{and} ~~ 
			& ~~0 \leq x_\alpha \leq 1 ~\forall ~ q_\alpha \in Q.
		\end{split}
	\end{equation*}

	Observe that in the optimum solution $\overline{OPT}_0$ of $\overline{Z}_0$, for each constraint (corresponding to a segment $\ell \in L(P)$), at least one of $\sigma_1(\ell)$ or $\sigma_2(\ell)$ will be greater than or equal to $\frac{1}{2}$. We now define two sets, namely ${\mathcal O}_1$ and ${\mathcal O}_2$. If $\sigma_1(\ell) \geq \frac{1}{2}$ then we put $\ell$ in the set ${\cal O}_1$, and if $\sigma_2(\ell) \geq \frac{1}{2}$ then put $\ell$ in the set ${\mathcal O}_2$. In other words, we choose to hit the objects $(D(p_i)\setminus D(p_j))$ for all $\ell_{ij}=[p_i,p_j]$, $i <j$, if $\sigma_1(\ell_{ij}) \geq \frac{1}{2}$, and choose to hit the objects $(D(p_j) \setminus D(p_i))$  for all $\ell_{ij}=[p_i,p_j]$, $i <j$, if $\sigma_2(\ell_{ij}) \geq \frac{1}{2}$. It needs to be mentioned that, for a constraint corresponding to a point-pair $\ell=[a,b]$ both $\sigma_1(\ell) \geq \frac{1}{2}$ and $\sigma_2(\ell) \geq \frac{1}{2}$ may happen. In that case $\ell$ may be considered in any one the sets ${\cal O}_1$, ${\cal O}_2$ arbitrarily. We form a new ILP as follows: 

	\begin{equation*} \label{eq:LP2}
		\begin{split}
			Z_1: 
			& \min \sum_{\alpha=1}^{|Q|} x_\alpha\\ 
			\text{subject to}~ 
			& \sigma_1(\ell) \geq 1~~ \forall~~ \ell \in {\mathcal O}_1, \\
			& \sigma_2(\ell) \geq 1~~ \forall~~ \ell \in {\mathcal O}_2, \\
			~~ \text{and} ~~ 
			& ~~x_\alpha \in \{0,1\} ~\forall ~ q_\alpha \in Q.
		\end{split}
	\end{equation*}
	
	We use $\overline{Z}_1$ to denote the LP corresponding to the ILP $Z_1$, $OPT_0$ and $OPT_1$, the optimal solutions of $Z_0$ and $Z_1$ respectively,  and $\overline{OPT}_0$ and $\overline{OPT}_1$ the optimal solutions of $\overline{Z}_0$ and $\overline{Z}_1$ respectively. Observe that $2\overline{OPT}_0$ produces a feasible solution to $\overline{Z}_1$. Thus, 
 
	\begin{equation}\label{eq1}
		\overline{OPT}_1 ~~\leq 2\overline{OPT}_0 ~~(\leq 2OPT_0).
	\end{equation}

	However, as the values of the variables in $\overline{OPT}_1$ are fractional, it is not possible to generate a solution of $Z_0$ from $\overline{OPT}_1$. Observe the objects ${\cal O}_1 \cup {\cal O}_2$ considered in $Z_1$ are either a unit square, or a L-shaped object for which one of the length or the width is 1. Thus, our objective is to solve the \textsc{L-Hit} problem, stated below.

		\begin{figure}[!ht]
		\centering
		\includegraphics[scale=0.75]{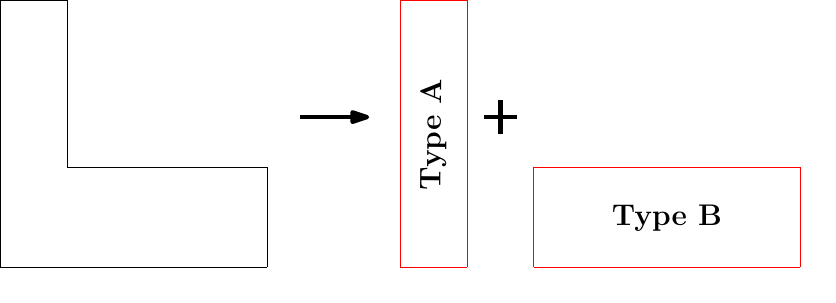}
		\caption{An L-shaped object, which is the union of a type A and a type B object.}
		\label{fzz1}
	\end{figure}

	\textbf{The L-HIT problem.} Here, given a set of L-shaped objects as defined above, and a set of points $Q$, we wish to choose a minimum-size set of points in $Q$ to hit all the L-shaped objects in ${\cal O}_1 \cup {\cal O}_2$. 
	
	We can view an L-shaped object as the union of two rectangles of type $A$ and $B$, where each type $A$ rectangle has height 1 and width less than or {\em equal} to $1$ and each type $B$ rectangle has width 1 and height less than 1 (see Figure~\ref{fzz1}). (Thus, unit squares are considered to be type $A$ rectangles.)
	
	While solving $\overline{Z}_1$, for each constraint (with respect to ${\cal O}_1$ and ${\cal O}_2$) any one or both of the following cases may happen: (a) the
sum of variables whose corresponding points lie in a type $A$ rectangle is $\geq\frac{1}{2}$, and (b) the
sum of variables whose corresponding points lie in a type B rectangle is $\geq\frac{1}{2}$. We accumulate all the rectangles in the set $\cal A$ (resp. $\cal B$) where condition (a) (resp. condition (b)) is satisfied.
The objective is to choose a minimum number of points in $Q$ to hit all the rectangles in ${\cal A}$ and ${\cal B}$. We formulate two ILPs' $Z_{\mathcal A}$ and $Z_{\mathcal B}$ corresponding to two \textsc{Min-UHR-Hit-Set} problems with the set of rectangles $\cal A$ and $\cal B$ respectively, as stated below.
\remove{
Again, as in Equation~\eqref{eq:LP2}, here also we can formulate it as an ILP problem, stated below: 

\begin{equation*} \label{LPAB}
		\begin{split}
			Z_{AB}: 
			& \min \sum_{\alpha=1}^{|Q|} x_\alpha\\ 
			\text{subject to}~ 
			& \sum_{q_\alpha \in A_i\cap Q} x_\alpha \geq 1, \text{for each rectangle} ~ A_i  \in {\mathcal A}, \\, \\
			& \sum_{q_\alpha \in B_i\cap Q} x_\alpha \geq 1, \text{for each rectangle} ~ B_i  \in {\mathcal B}, \\, \\
			~~ \text{and} ~~ 
			& ~~x_\alpha \in \{0,1\} ~\forall ~ q_\alpha \in Q.
		\end{split}
	\end{equation*}
Denoting by $\overline{Z}_{AB}$ the LP version corresponding to $Z_{AB}$, $OPT_{AB}$ and $\overline{OPT}_{AB}$ the solutions of $Z_{AB}$ and $\overline{Z}_{AB}$, we have 

	\begin{equation*}\label{eqAB}
		\overline{OPT}_{AB} \leq 2\overline{OPT}_1 \leq 4\overline{OPT}_0 ~~ \text{(see Equation~\eqref{eq1})}.
	\end{equation*}

Again as mentioned earlier, due to the fractional nature of the solution of $\overline{Z}_{AB}$, it is not possible to generate solution of $Z_{AB}$. Moreover, there may have non-empty intersection among the subset of variables in the solution satisfying constraints corresponding to $\cal A$ and also the constraints corresponding to $\cal B$. Thus, $\overline{OPT}_{AB} \leq \overline{OPT}({\mathcal A})+ \overline{OPT}({\mathcal B})$, where $Z({\mathcal A})$ and $Z({\mathcal B})$ are two ILPs' corresponding to  two \textsc{Min-UHR-Hit-Set} problems with the set of rectangles from $\cal A$, as defined below. (The equivalent problem with rectangles from $\cal B$ can be defined analogosuly.)
}
	
    \Pb{\textsc{Minimum Unit-Height Rectangle Hitting Set (Min-UHR-Hit-Set)}}
	{A set $\cal R$ of unit-height rectangles in $\mathbb{R}^2$.}
	{A set of points that hits all the members of $\cal R$.}

A PTAS for the \textsc{Min-UHR-Hit-Set} problem is known~\cite{Ray-mustafa}; however it cannot be used in Equation~\eqref{eq1} since that does not guarantee any approximation factor for the optimum solution of the corresponding LP problem. However, the \textsc{Min-UHR-Hit-Set} problem for a set of rectangles ${\cal R} = {\cal A}$ (resp. ${\cal B}$) can be formulated as an ILP $Z_{\cal A}$ (resp. $Z_{\cal B}$) as follows:

\begin{minipage}{0.5\textwidth}
		\centering
		\begin{equation*}\label{ILPA}
		\begin{split}
			Z_{\mathcal A}: 
			& \min \sum_{\alpha =1}^{|Q|}x_\alpha,\\
			\text{subject to}\!\!\!
			& \sum_{q_\alpha \in A_i\cap Q} \!\!x_\alpha \geq 1, \forall~ \text{rectangle} ~ A_i  \in {\mathcal A}, \\
			\text{and} ~~
			&~~ x_\alpha \in \{0,1\}, \forall~ \alpha \in Q.
		\end{split}
	\end{equation*} 
	\end{minipage}
	\begin{minipage}{0.5\textwidth}
		\centering
		\begin{equation*}\label{ILPB}
		\begin{split}
			Z_{\mathcal B}: 
			& \min \sum_{\alpha =1}^{|Q|}x_\alpha,\\
			\text{subject to} \!\!\!
			& \sum_{q_\alpha \in B_i\cap Q} \!\!x_\alpha \geq 1, \forall~ \text{rectangle} ~ B_i  \in {\mathcal B}, \\
			\text{and} ~~
			&~~ x_\alpha \in \{0,1\}, \forall~ \alpha \in Q.
		\end{split}
	\end{equation*} 
	\end{minipage}
	
	 Denoting by $\overline{Z}_{\cal A}$ and $\overline{Z}_{\cal B}$ the LP version of $Z_{\cal A}$ and $Z_{\cal B}$, and $\overline{OPT}_{\cal A}$ and  $\overline{OPT}_{\cal B}$ the optimum solutions for $\overline{Z}_{\cal A}$ and $\overline{Z}_{\cal B}$ respectively, we observe that
 $2\overline{OPT}_1$ gives a feasible solution to both $\overline{Z}_{\mathcal A}$ and $\overline{Z}_{\mathcal B}$ simultaneously. Note that, as the variables participating in $\overline{OPT}_1$ and $\overline{OPT}_2$ may not be disjoint, it is not possible to write $\overline{OPT}_{\cal A}+\overline{OPT}_{\cal B}=2\overline{OPT}_1$. However, 
$\overline{Z}_{\mathcal A}\leq 2\overline{OPT}_1$ and $\overline{Z}_{\mathcal B}\leq 2\overline{OPT}_1$.
 Thus by Equation~\eqref{eq1}, we have 
	\begin{equation}\label{eq2}
		\overline{OPT}_{\mathcal A}+\overline{OPT}_{\mathcal B} \leq 4 \overline{OPT}_1 \leq 8\overline{OPT}_0.
	\end{equation}
 
 	For solving the \textsc{Min-UHR-Hit-Set} problem, we use a similar approximation scheme as the one for unit-height rectangles described in \cite{Agarwal}. Here, $\mathbb{R}^2$ is split using $x$-axis parallel lines from a set ${\cal L}=\{\lambda_{1}, \lambda_{2}, \ldots\}$ such that $\lambda_i$ and $\lambda_{i+1}$ are at distance~1 of each other, and such that each rectangle is hit by one of the lines of $\cal L$. Thus, each rectangle in $\cal A$ is intersected by \emph{exactly} one line of $\cal L$ (assuming that no rectangle in  $\cal A$ is aligned with a line in $\cal L$). See Figure~\ref{fig:shiftingStrategy} for a visual representation.
	
	\begin{figure}[!ht]
		\centering
		\includegraphics[scale=0.75]{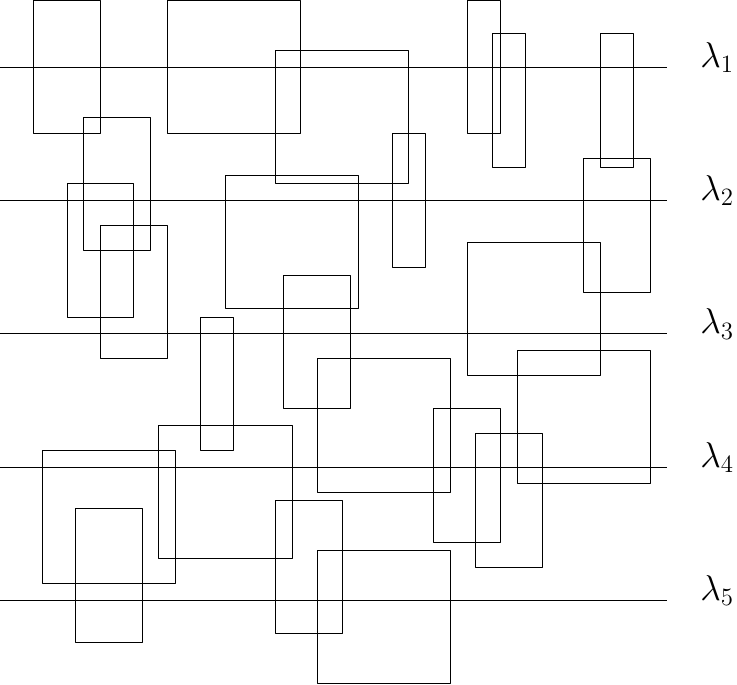}
		\caption{Demonstration of splitting the plane into unit-width horizontal strips; $\lambda_i$s' denote the horizontal lines defining the strips.}
		\label{fig:shiftingStrategy}
	\end{figure}

	Let ${\cal A}_{even}$ (resp.  ${\cal A}_{odd}$) denote the set of rectangles in $\cal A$ that are intersected by  even (resp. odd) numbered lines of $\cal L$. Now, denoting by $Z_{\cal A}$, $Z({\cal A}_{even})$ and $Z({\cal A}_{odd})$ the ILP of the hitting set problems corresponding to the set of rectangles $\cal A$, ${\cal A}_{even}$ and ${\cal A}_{odd}$ respectively, $OPT_{\cal A}$, $OPT({\cal A}_{even})$ and $OPT({\cal A}_{odd})$ as the  optimum solutions of these problems, and $\overline{OPT}_{\cal A}$, $\overline{OPT}({\cal A}_{even})$ and $\overline{OPT}({\cal A}_{odd})$ as the  optimum solutions of the corresponding LP problems, we have
 
	\[\overline{OPT}({\cal A}_{even}) \leq \overline{OPT}_{\cal A} ~~\text{and} ~~\overline{OPT}({\cal A}_{odd}) \leq \overline{OPT}_{\cal A}.\] 

	Now, combining these two inequalities, we have

	\begin{equation} \label{eq3}
		\overline{OPT}({\cal A}_{even}) + \overline{OPT}({\cal A}_{odd}) \leq 2\overline{OPT}_{\cal A}.
	\end{equation}

\green{We will now need the following lemma.}
        
	\begin{lemma}\label{hit-by-horizontal-line}
	\blue{If there exists a horizontal line $\lambda$ that stabs a set of axis-parallel rectangles $\cal D$, then the optimum hitting set for the set $\cal D$ can be computed in polynomial time using the LP relaxation of the corresponding ILP problem.}     
	\end{lemma}
	\begin{proof}
	   \blue{Let $\Pi$ be the optimal hitting set for the rectangles in $\cal D$. Consider the intersection of the elements of $\cal D$ with the line $\lambda$. As $\lambda$ intersects all the members in $\cal D$, if we shift every member of $\Pi$ on $\lambda$ it will not miss to hit any rectangle of $\cal D$ that it was hitting in its earlier position. Thus, our problem reduces to hitting a set of intervals obtained by the intersection of $\lambda$ with the rectangles in $\cal D$.   
	   Considering the arrangement of those intervals and choosing a point in each cell of the arrangement, we can formulate this hitting set problem as an ILP. Again, since the coefficient matrix corresponding to the constraints of this ILP satisfies the consecutive-1-property, we can get the optimal solution of this ILP by solving its LP relaxation~\cite{schrijver2003}.\hfill\qed}
	\end{proof}

Denoting by $Z({\cal A}_i)$ the hitting set problem with the set of  rectangles intersected by $\lambda_i\in \cal L$, $OPT({\cal A}_i)$ and $\overline{OPT}({\cal A}_i)$ as the optimum  solutions for the ILP and LP versions of $Z({\cal A}_i)$, and Lemma~\ref{hit-by-horizontal-line}, we have  $\overline{OPT}({\cal A}_{odd})=OPT({\cal A}_{odd}) = OPT({\cal A}_1) + OPT({\cal A}_3) +  \ldots$. The reason is that for the problem $Z({\cal A}_{odd})$ none of the  rectangles in  ${\cal A}_i$ overlap with any rectangle in  ${\cal A}_j$, for all $i,j$ odd and $i\neq j$. Thus, the hitting set problem for those instances can be solved independently. Similarly, we have $\overline{OPT}({\cal A}_{even})=OPT({\cal A}_{even}) = OPT({\cal A}_2) + OPT({\cal A}_4) + \cdots$. 

	Equations~\eqref{eq2}, \eqref{eq3} and the subsequent discussions lead to the following. Considering the previously computed optimal solutions $SOL(Z_{\cal A})$ and $SOL(Z_{\cal B})$ for $Z_{\cal A}$ and $Z_{\cal B}$, which, by the previous discussions, together form a solution for $Z_0$, we obtain the following chain of inequalities.

	\begin{equation}\label{xyz} \vspace{-0.1in}
		\begin{split}
			|SOL(Z_{\cal A})| + |SOL(Z_{\cal B})|
			& = |SOL({\cal A}_{even})| + |SOL({\cal A}_{odd})| + |SOL({\cal B}_{even})| + |SOL({\cal B}_{odd})|\\
			& =   \overline{OPT}({\cal A}_{even}) + \overline{OPT}({\cal A}_{odd}) + \overline{OPT}({\cal B}_{even}) + \overline{OPT}({\cal B}_{odd}) \text{ \green{by Lemma~\ref{hit-by-horizontal-line}}}\\
			& \leq 2 \times (\overline{OPT}(Z_{\cal A}) + \overline{OPT}(Z_{\cal B})) \text{    by Equation~\eqref{eq3}}\\
			&\leq 16 \times \overline{OPT}_0 \text{    by Equation~\eqref{eq2}}\\
			&\leq 16 \times OPT_0.
		\end{split} \vspace{-0.1in}
	\end{equation}

        \green{Thus, we have proved the following.}
        
	\begin{lemma} \label{lemm}
		The aforesaid algorithm computes a 16-factor approximate solution for \textsc{Min-Seg-Stab-Set}.  
	\end{lemma} 

	We accumulate the hitting set for type $A$ rectangles corresponding to each horizontal line  and the hitting set for type $B$ rectangles corresponding to each vertical line in a set $Q^*$. By Observation~\ref{test-C}, at most one point in $P$ may not be covered by the squares centered at the points of $Q^*$. Thus, we  may require at most one extra square to cover that uncovered point. Thus, we have the following.
		
	\begin{theorem} \label{8approx}
There exists a polynomial-time algorithm for the \textsc{Continuous-G-Min-Disc-Code} problem for axis-parallel unit squares which produces a solution of size at most $16 \cdot OPT+1$, where $OPT$ is the size of an optimal solution. 
 	\end{theorem}

\subsection{Approximation algorithm for \textsc{Discrete-G-Min-Disc-Code}}\label{disscrete}

In this section, we modify the algorithm of Section~\ref{appxAlgo} for \textsc{Continuous-G-Min-Disc-Code} to solve \textsc{Discrete-G-Min-Disc-Code}. Recall that here, in addition to the set of points $P$ (in $\mathbb{R}^2$), the set $S$ of axis-parallel unit squares is also given in the input. As in Section~\ref{appxAlgo}, \textsc{Discrete-G-Min-Disc-Code} reduces to the \emph{discrete} version of the \textsc{Min-UHR-Hit-Set} problem, whose objective is to hit a set $\cal O$ of unit width/height rectangles by choosing a minimum cardinality subset of a given set of points $Q$. Unlike the continuous version (Lemma~\ref{hit-by-horizontal-line}), the discrete version of the hitting set problem for a set of unit-height rectangles intersected by a horizontal line cannot be solved in polynomial time, since the points to be used for hitting the rectangles are already specified (cannot be chosen suitably). However, if we can design an \green{LP-based} $\alpha$-factor approximation algorithm for the discrete version of \textsc{Min-UHR-Hit-Set} when the unit-height rectangles are intersected by a single horizontal line, then we can use it to get a $16\alpha$-factor approximation algorithm for the discrete version of \textsc{Discrete-G-Min-Disc-Code} problem (see Equations~\eqref{eq3} and~\eqref{xyz}). \green{(By \emph{LP-based} we mean that we must be able to compute a solution that is at most $\alpha$ times the optimal solution size of the LP for \textsc{Discrete-Min-UHR-Hit-Set} when all rectangles are intersected by a horizontal line.)}

	\green{We will describe an LP-based $4$-factor approximation algorithm for the \textsc{Discrete-Min-UHR-Hit-Set} problem where the rectangles are hit by a horizontal line (see Lemma~\ref{UHTrestricted} stated at the end of the section). Thus, this will imply the following.}

\begin{theorem} \label{Dapprox}
There exists a polynomial-time algorithm for the \textsc{Discrete-G-Min-Disc-Code} problem for axis-parallel unit squares which produces a solution of size at most $64 \cdot OPT+1$, where $OPT$ is the size of an optimal solution.
\end{theorem}
	
	\begin{figure}[!ht] \vspace{-0.1in}
			\centering
			\includegraphics[scale=0.6]{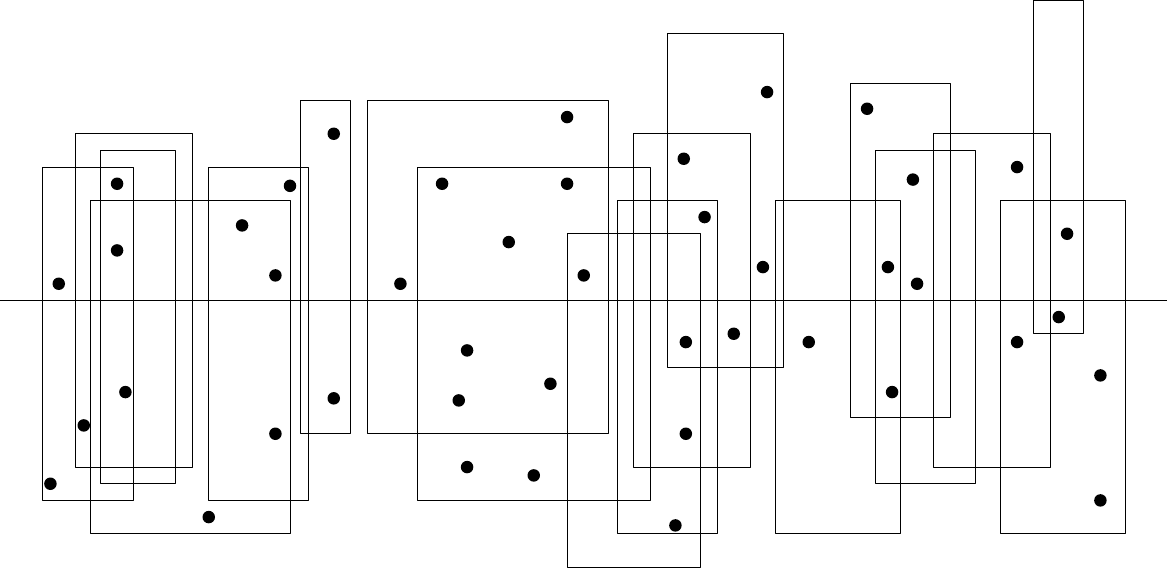}
			\caption{An instance of discrete hitting set of unit-height rectangles stabbed by a horizontal line.}
			\label{discHitHorSeg} \vspace{-0.2in}
		\end{figure}

	\subsubsection{\textsc{Discrete-Min-UHR-Hit-Set} for rectangles stabbed by a single horizontal line}

		Let us first solve a restricted version of the \textsc{Discrete-Min-UHR-Hit-Set} problem, where the input is a set of axis-parallel unit-height rectangles $R$ intersected by a single horizontal line $\lambda$ and a set of points $Q$ (see Figure~\ref{discHitHorSeg}). The objective is to choose a minimum number of points from $Q$ to hit all the rectangles in $R$. This problem can be formulated as the following ILP.
		\begin{equation*} \label{abc}
			\begin{split}
				{\cal U_\lambda}: & \min \sum_{q_\alpha \in Q} x_\alpha\\
				\text{Subject to}~ & \sigma_1(r)+\sigma_2(r) \geq 1, \text{for all} ~r \in R,
			\end{split}
		\end{equation*}
		where $\sigma_1(r)$ (resp. $\sigma_2(r)$) is the sum of the variables corresponding to the points in $Q$ above (resp. below) the line $\lambda$ that lie inside the rectangle $r$. We will use $OPT_{\lambda}$ to denote the optimum solution of this ILP. 

		On the basis of the LP relaxation of this ILP, we can partition the rectangles into two groups: $R^a$ and $R^b$, such that $R^a$ (resp. $R^b$) contains the rectangles whose solution in the LP relaxation satisfies $\sigma_1(r)\geq \sigma_2(r)$ (resp. $\sigma_1(r)<\sigma_2(r)$). The rectangles in $R^a$ (resp. $R^b$) will thus be assumed to be hit by the points in $Q^a$ (resp. $Q^b$) that lie above (resp. below) the line $\lambda$; $Q^a \cup Q^b = Q$, $Q^a \cap Q^b=\emptyset$.
                
		Let ${\cal U}_\lambda^a$ and ${\cal U}_\lambda^b$ be the ILPs for the minimum hitting set problems for the rectangles in $R^a$ and points in $Q^a$ (resp. $R^b$ and $Q^b$). As in the relation of $\overline{OPT}_{\cal A}$, $\overline{OPT}_{\cal B}$ and $\overline{OPT}_1$ in Equation~\eqref{eq2}, here also  denoting by $\overline{OPT}_\lambda^a$ and $\overline{OPT}_\lambda^b$ as the optimum solutions of the LP relaxation of ${\cal U}_\lambda^a$ and ${\cal U}_\lambda^b$, respectively, we have 

		\begin{equation} \label{x1}
			\overline{OPT}_\lambda^a + \overline{OPT}_\lambda^b \leq 2\overline{OPT}_\lambda
		\end{equation}

		due to the fact that $Q^a$ and $Q^b $ are disjoint point sets and $2\overline{OPT}_\lambda$ is a feasible solution of both the problems $\overline{\cal U}_\lambda^a$ and $\overline{\cal U}_\lambda^b$. Now, we concentrate on solving the ILP ${\cal U}_\lambda^a$. The problem ${\cal U}_\lambda^b$ can be solved in a similar manner.

	\subsubsection{Approximation algorithm for solving ${\cal U}_\lambda^a$}
	
	\begin{figure}[t]
		\centering
		\includegraphics[scale=0.6]{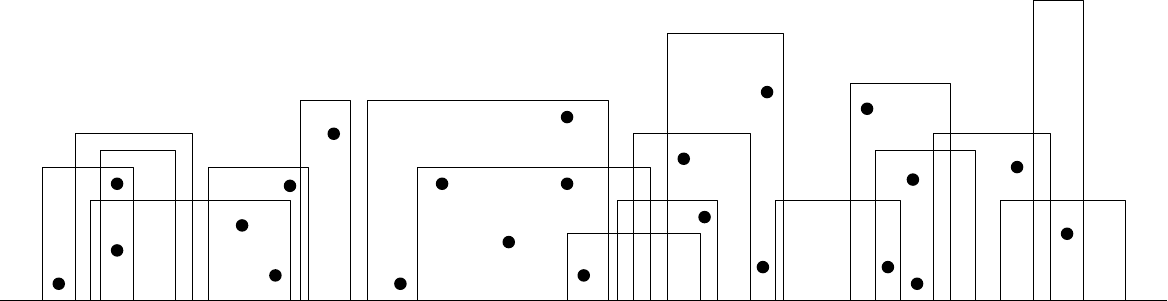}
		\caption{The instance where the rectangles above the horizontal line are considered.}
		\label{discHitHorSegAbove} \vspace{-0.2in}
	\end{figure}

	We have a set of rectangles $R^a$, each one having its bottom boundary aligned with a horizontal line $\lambda$, and a set of points $Q^a\subseteq Q$ that lie above the line $\lambda$. The objective is to find a subset of $Q^a$ of minimum size, say $OPT^a$,  to hit all the members in $R^a$ (see Figure~\ref{discHitHorSegAbove}).

        \green{Note that this problem can be solved in polynomial time by a dynamic programming technique similar to the one used in~\cite{KATZ}. However, in order to plug-in the solution of this problem into Equation~\eqref{x1}, we need that the solution is obtained using the LP relaxation of its corresponding ILP formulation (it may not be optimal, but we need a guaranteed approximation factor).}
        
		\green{We will use known techniques from the literature to get an integer-valued 2-factor approximation algorithm for the ILP problem ${\cal U}_\lambda^a$ obtained from its LP solution (see Equation~\eqref{ua}).}
		
        \begin{equation} \label{ua}
			\begin{split}
				{\cal U}_\lambda^a: & \min \sum_{q_\alpha \in Q^a} x_\alpha\\
				\text{Subject to}~ & \sum_{q_\alpha \in Q^a \cap r} x_\alpha \geq 1  ~\text{for all} ~r \in R^a,\\
				& x_\alpha \in \{0,1\} ~\text{for all}~ q_\alpha \in Q^a 
			\end{split}
		\end{equation}


\begin{definition}\green{Let $\epsilon>0$ be fixed and consider a set system $(X,\mathcal R)$. A set $N\subseteq X$ is an \emph{$\epsilon$-net of $(X,\mathcal R)$} if for every subset $S\in\mathcal R$ for which $|S|\geq\epsilon|X|$, $N$ contains a point of $S$.}
  \end{definition}

\green{  In other words, an $\epsilon$-net is a hitting set of the set system over $X$ containing only large enough sets of $\mathcal R$.}

\begin{lemma}[\cite{Raman}]\label{net-eps}
\green{For a set system $(Q^a,R^a)$ and a horizontal line $\lambda$ such that $Q^a$ is a set of points lying above $\lambda$ and $R^a$ is a set of rectangles each having their base on $\lambda$, for every positive $\epsilon>0$, an $\epsilon$-net of size $\frac{2}{\epsilon}$ can be constructed in polynomial time.} 
\end{lemma}
\begin{proof}
 \green{We split the half-plane above the line $\lambda$ into 
disjoint vertical strips such that each strip contains 
$\lfloor\frac{\epsilon n}{2}\rfloor$ points (see Figure~\ref{fig:eps-net}). Now, from each strip, we choose the bottom-most points. This is an $\epsilon$-net of size $\frac{2}
{\epsilon}$ for the set 
system $(Q^a, R^a)$ due to the fact that for any rectangle 
containing at least $\epsilon n$ points, 
its horizontal span must contain that of a strip, and hence the 
corresponding rectangle is hit by the point chosen in that strip.} \hfill\qed
\end{proof}

\begin{figure}[!ht]
\centerline{\includegraphics[scale=0.6]{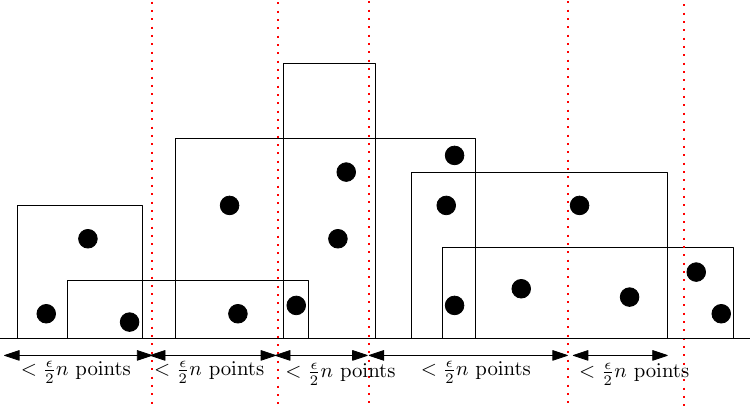}}
\caption{Illustration of the existence of a $\frac{2}{\epsilon}$-size $\epsilon$-net for the set system $(Q^a,R^a)$.}\label{fig:eps-net}
\end{figure}


\green{The following theorem is proved in~\cite{EVEN}.}

		\begin{theorem}[\cite{EVEN}] \label{even}
		  \green{An $\epsilon$-net of size $\frac{d}{\epsilon}$ for a set system $(X, {\cal R})$ with $\epsilon=1/\overline{OPT}$, where $\overline{OPT}$ is the optimal value of the LP for \textsc{Hitting Set} on $(X, {\cal R})$, is a hitting set of $(X, {\cal R})$ of size at most $d\cdot\overline{OPT}$.}
		\end{theorem}
		
		

                \green{                Finally, plugging-in Lemma~\ref{net-eps} and Theorem~\ref{even} with $d=2$ in Equation~\eqref{x1}, we have the following result needed to complete the proof of Theorem~\ref{Dapprox}.
                  }

\begin{lemma} \label{UHTrestricted}
  \green{In polynomial time, one can compute a solution for \textsc{Discrete-Min-UHR-Hit-Set} when all rectangles are intersected by a horizontal line, whose size is at most 4 times the optimal value of the corresponding LP.}
\end{lemma}

\remove{
\begin{figure}[!h]
			\centering
			\includegraphics[scale=0.6]{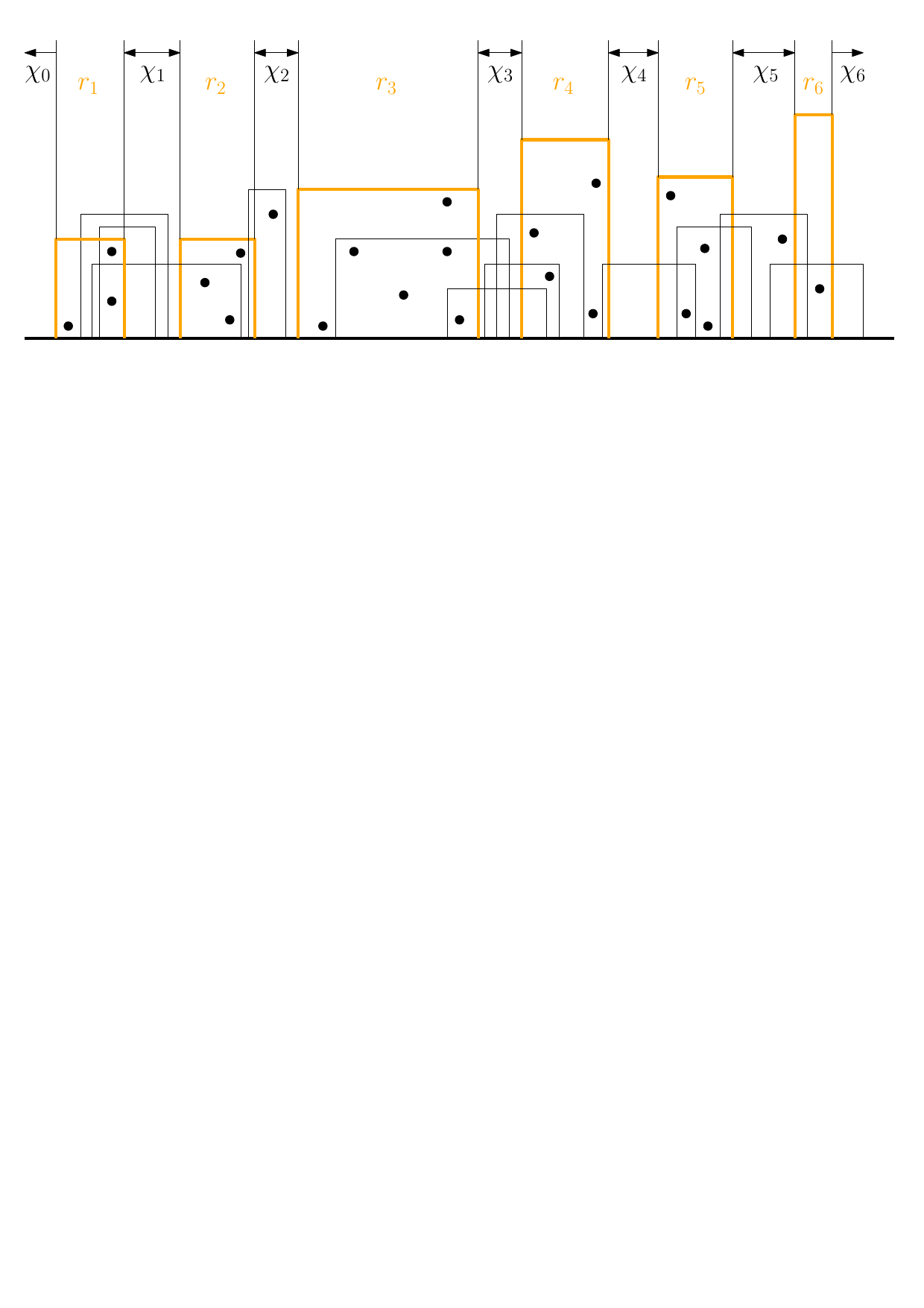}
			\caption{Strip decomposition induced by the rectangles in the maximum independent set $\cal I$ of rectangles from $R^a$ (members in $\cal I$ are shown in orange color).}
			\label{discHitHorSegAboveRa}
		\end{figure}
	
We first compute the maximum independent set ${\cal I}=\{r_1, r_2, \ldots\}$ of the set of rectangles $R^a$ such that there does not exist any other maximum independent set ${\cal J}=\{r_1', r_2', \ldots\}$ (of same size) where the lower boundary (along the line $\lambda$) of a rectangle $r_j' \in {\cal J}$  completely contains the lower boundary of a rectangle $r_i \in \cal I$. The set ${\cal I}$ can be computed in the same way we compute the maximum independent set of a set of intervals where each interval is the intersection of a rectangle in $R^a$ and the horizontal line $\lambda$. It can be computed in $O(n\log n)$ time with the concept of interval scheduling \cite{Tardos}.

\blue{The rectangles in $\cal I$ creates $2|{\cal I}|-1$ vertical strips, namely $\{\chi_1,\chi_2, \ldots, \chi_{2|{\cal I}|},\chi_{2|{\cal I}|-1}\}$, where each strip $\chi_{2i-1}$ corresponds to the rectangle $r_i \in \cal I$, and each strip $\chi_{2i}$ is bounded by the left side of $r_i$ and the right side of $r_{i+1}$.  
Figure~\ref{discHitHorSegAboveRa} demonstrates the strips, where the  rectangles in the maximum independent set $\cal I$ are shown using orange color. }

Let $\Delta$ = \{lowest point of $Q^a$ inside $r_i$ for all the 
elements $r_i \in \cal I$\}. Such a point inside each $r_i$ will always 
exist. 
\begin{lemma} \label{max-indep-set}
 \blue{$\Delta$ is the minimum hitting set for the rectangles in $\cal I$, and it can be obtained by the LP relaxation of the corresponding ILP problem.}
\end{lemma}
\begin{proof} \blue{As the ILP problem corresponding to the hitting set of the rectangles in $\cal I$ uses points in $\cup_{r_i \in \cal I} (r_i \cap Q^a)$, considering those points in order (from left to right), the coefficient matrix of the corresponding LP problem becomes  a totally monotone matrix (satisfies the consecutive-1-property), and we have integer valued solution of this LP problem whose size is equal to $|\Delta|$. Thus, we can replace the point inside a rectangle in $\cal I$ obtained as the LP solution by the point in $\Delta$ lying in that rectangle. Note that, the other points in $Q^a$ are all outside the members of $\cal I$, and hence they need not be considered. Thus Part (a) of the lemma follows.\hfill \qed} 
\end{proof}

\blue{Let $\Delta'$ = \{lowest point of $Q^a$ inside the strip $\chi_i$, $i=2,4,\ldots, 2(|{\cal I}|-1)$\} (if exists). Thus, $|\Delta'| \leq |\Delta|$. }

\blue{Let  $R' \subseteq R^a$ be the set of rectangles that are hit by $\Delta \cup \Delta'$ (see Figure~\ref{deltas}); thus ${\cal I} \subseteq R'$. The remaining set of rectangles $R''=R^a\setminus R'$ are not hit by $\Delta \cup \Delta'$. Now,if a rectangle $r \in R''$ spans the strips $\chi_i,\chi_{i+1} \ldots, \chi_j$, we need to consider two cases: (i) $j=i+1$, and $j > i+1$. If $j > i+1$, then there exists no point in the strips $\chi_{i+1} \ldots, \chi_{j-1}$ that can hit $r$. Thus, in order to hit $r$, one needs to choose a point in the left-part or in the right-part of $r$, where the {\it left-part} of $r$ is the portion of $r\cap \chi_i$, and the {\it right-part} of $r$ is the portion $r \cap \chi_j$. If $j=i+1$, then also $r$ is hit by a point lying in $r \cap \chi_i$ or by a point in $r\cap \chi_j$.  Thus, we have the following ILP for computing the optimum solution for the rectangles in $R''$.}
		
		\begin{figure}[t]
			\centering
			\includegraphics[scale=0.6]{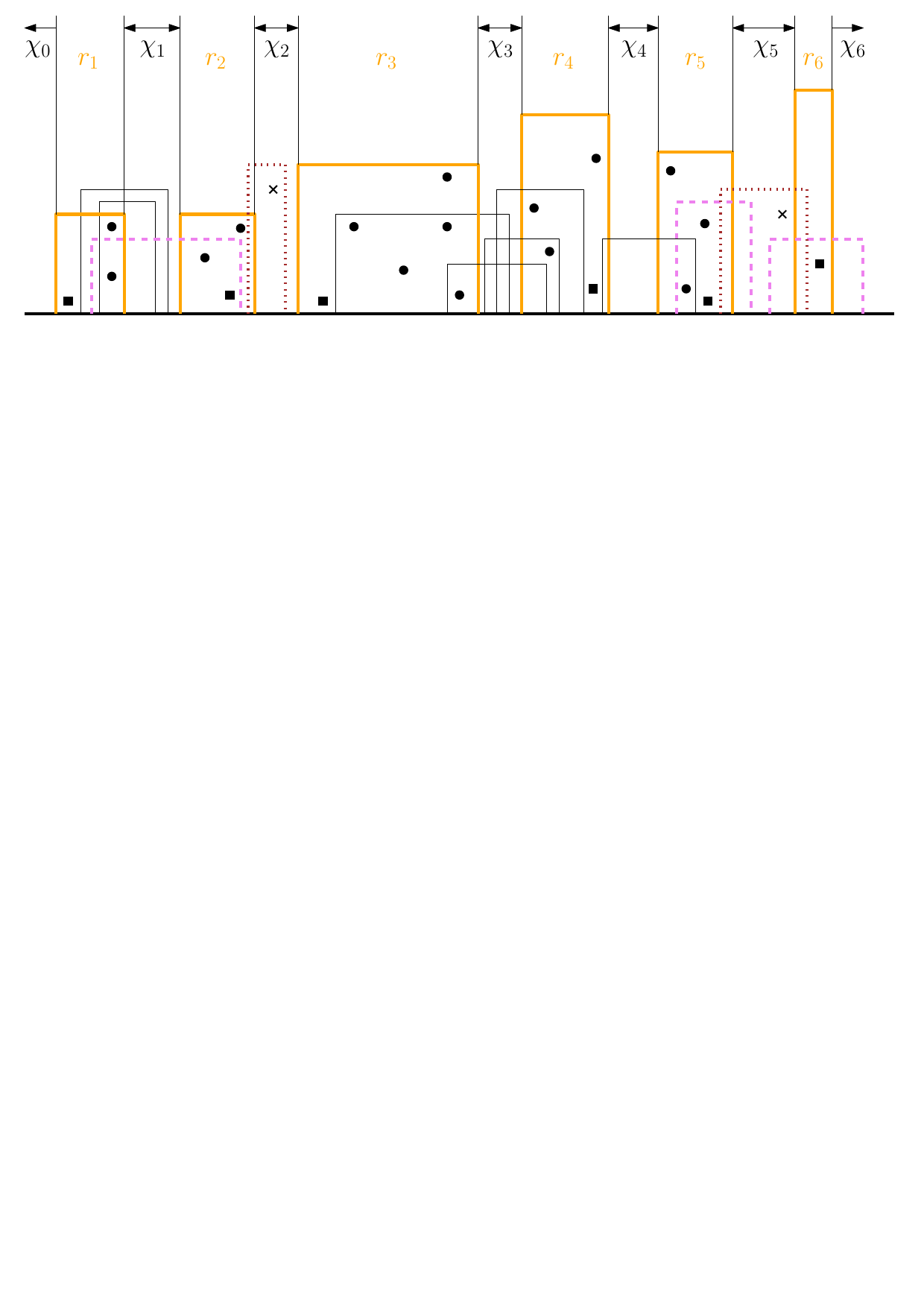}
			\caption{Demonstration of (i) The points part of $\Delta$ (shown as square points) and $\Delta'$ (shown as cross points), where (ii) the rectangles in $R'$ (shown using violet dashed lines) and $R''$ (shown using brown dotted lines) }
			\label{deltas}\vspace{-0.1in}
		\end{figure}
		
\vspace{-0.2in}
		\begin{equation} \label{l-hit-r-hit} 
			\begin{split}
				{\cal V}'': & \min \sum_{p_\alpha \in Q^a} x_\alpha~\\
				\text{Subject to}~ & \sigma_{left}(r)+\sigma_{right}(r)\geq 1~\forall~ r \in R'', \\
				&x_\alpha \in \{0,1\} ~\text{for all}~ p_\alpha \in Q^a 
			\end{split}
		\end{equation}
		
		where $\sigma_{left}(r)$ (resp. $\sigma_{right}(r)$) is the sum of the variables corresponding to the points in $Q^a$ in the left-part (resp. right-part) of the rectangle $r$. 
		As in Section~\ref{appxAlgo}, here also the solution of its LP relaxation $\overline{\cal V}''$ partitions the set $R''$ into two {\em disjoint} subsets $R_{left}$ (called left-hit rectangles) and $R_{right}$ (called right-hit rectangles) such that for each $r \in R_{left}$, we have $\sigma_{left}(r)\geq \sigma_{right}(r)$ (thus, $\sigma_{left}(r)\geq\frac{1}{2}$) and for each $r \in R_{right}$, we have $\sigma_{left}(r)<\sigma_{right}(r)$ (thus, $\sigma_{right}(r)>\frac{1}{2}$). 
		
\begin{lemma} 
\blue{(a) $\Delta$ is a lower bound for $\overline{OPT}_\lambda^a$, and
(b) $\overline{OPT}({\cal V}'')$ is also a lower bound for $\overline{OPT}_\lambda^a$.}
 \end{lemma}
 \begin{proof}
\blue{Part (a) follows from the fact that $\cal I$ is a subset of $R^a$ and $\Delta$ is the optimum hitting set for the rectangles in $\cal I$.}

\blue{Part (b) follows from the fact that a rectangle $r \in R''$ is only hit at either its left-part or its right-part. Thus, 
$\overline{OPT}({\cal V}'')$ is the optimum solution of the LP relaxation of the ILP for the hitting set of $R''$.\hfill \qed}
 \end{proof}
		
One can compute the optimum solutions for the left-hit and right-hit rectangles using the following ILPs.

        \begin{minipage}{0.5\textwidth}
		\centering
		\begin{equation} \label{xa}
			\begin{split}
				{\cal V}_{left}'': & \min \sum_{p_\alpha \in Q^a} x_\alpha~ \\
				\text{Subject to}~ & \sigma_{left}(r)\geq 1~\forall ~r \in R_{left},\\
				&x_\alpha \in \{0,1\} ~\text{for all}~ p_\alpha \in Q^a.
			\end{split} 
		\end{equation}
        \end{minipage}
        \begin{minipage}{0.5\textwidth}
		\centering
		\begin{equation} \label{xb}
			\begin{split}
				{\cal V}_{right}'': & \min \sum_{p_\alpha \in Q^a} x_\alpha~ \\
				\text{Subject to}~ & \sigma_{right}(r)\geq 1~\forall ~r \in R_{right},\\
				&x_\alpha \in \{0,1\} ~\text{for all}~ p_\alpha \in Q^a.
			\end{split} 
		\end{equation}
        \end{minipage}

\begin{lemma}		
Denoting by $\overline{OPT}_{{\cal V}_{left}}''$, $\overline{OPT}_{{\cal V}_{right}}''$, $\overline{OPT}_{\cal V}''$ the optimal solutions of the LP versions of ${\cal V}_{left}''$, ${\cal V}_{right}''$ and ${\cal V}''$, we have 
\begin{equation}\label{rr}
 \overline{OPT}_{{\cal V}_{left}}'' + \overline{OPT}_{{\cal V}_{right}}'' \leq 2\overline{OPT}_{\cal V}''.
\end{equation}
\end{lemma}

\begin{proof}
Here the only thing to be mentioned is that the points used in the ILP problems ${\cal V}_{left}''$ and ${\cal V}_{right}''$ are disjoint. The reason is as follows: let $\chi_i$ be a strip such that there exists a rectangle $r\in R_{left}$ and a rectangle $r' \in R_{right}$ that are hit be a common point $q \in \chi_i\cap Q^a$. This says that the union of the horizontal span or $r$ and $r'$ spans the entire strip $\chi_i$. In other words, at least one of $r$ and $r'$ is hit by the point $q$. This contradicts the definition of $R''$.   

As in Equations~\eqref{eq2} and \eqref{x1}, here also $2\overline{OPT}_{\cal V}''$ is a feasible solution for the LP versions of both ${\cal V}_{left}''$ and ${\cal V}_{right}''$ as it safisfies all the constraints of ${\cal V}_{left}''$ and ${\cal V}_{right}''$. \hfill \qed
\end{proof}

In Subsection~\ref{left-sol}, stated below, we will describe the method of computing an integer valued solution of ${\cal V}_{left}''$ and ${\cal V}_{right}''$. This leads to the following result. 

\begin{lemma} \label{4a}
\blue{$(\Delta+\Delta'+\overline{OPT}_{{\cal V}_{left}}''+\overline{OPT}_{{\cal V}_{right}}'')$ is a solution of our hitting set problem for the rectangles in $R^a$ (with a common base) by a given set of points $Q^a$ with approximation factor 4.}
\end{lemma}
\begin{proof}
\blue{We already have $\Delta' < \Delta$, and    
$\overline{OPT}_{{\cal V}_{left}}''+ \overline{OPT}_{{\cal V}_{right}}'' \leq 2\overline{OPT}_{\cal V}''$ (see Equation~\eqref{rr}).  Moreover,   $\Delta$, $\Delta'$, $\overline{OPT}_{{\cal V}_{left}}''$ and 
$\overline{OPT}_{{\cal V}_{right}}''$ are all integer valued, and the corresponding points will serve as the hitting set of the rectangles in $R^a$. Thus, the result follows. \hfill\qed}
\end{proof}

{\bf Note:} \blue{Points in $\Delta$ and $\Delta'$ cannot be ignored from including in the solution to ensure the hitting of all the rectangles in $R'$. Even if a strip $\chi_i$ contains a point $q\in Q^a$ lying in the solution $\overline{OPT}_{{\cal V}_{left}}''$ or in $\overline{OPT}_{{\cal V}_{right}}''$, there may exists a rectangle $r \in R'$ spanning through $\chi_i$ that is thin enough for not to be hit by $q$. However, by definition of $R'$, it is surely hit by the point in $\chi_i$ lying in $\Delta$ or $\Delta'$ depending on whether $\chi_i$ corresponding to a rectangle in $\cal I$ or not.}

\begin{lemma} \label{UHTrestricted}
\blue{The approximation factor of the proposed algorithm for the discrete version of the problem of hitting unit-height axis-parallel rectangles intersected by a horizontal line  is 8.}
\end{lemma}
\begin{proof}
\blue{By Lemma~\ref{4a}, we have 4-factor approximation result for both $\overline{OPT}^a_\lambda$ and $\overline{OPT}^b_\lambda$ for each line $\lambda$ defining the unit-width horizontal strip decomposition of the plane. Thus, the result follows.\hfill\qed}
\end{proof}

We now describe the computation of $\overline{OPT}_{{\cal V}_{left}}''$. The computation of $\overline{OPT}_{{\cal V}_{right}}''$ can be obtained using a similar method.

\subsubsection{Computation of an optimal solution of the LP corresponding to ${\cal V}_{left}''$} \label{left-sol}

\blue{We will consider the strips $\{\chi_1, \chi_2, \ldots, \chi_{2|{\cal I}|}\}$ in this order. Let $\Gamma_i$ be the subset $R_{left}$ whose left-part lie inside $\chi_i$. From now onwards, we use $\Gamma_i$ to denote the left-part of the rectangles in $\Gamma_i$.
Note that the union of the members in $\Gamma_i$ forms a staircase polygon, say $S_i$ whose base is aligned with the (horizontal) line $\lambda$, and right side is aligned with the left boundary of 
$\chi_i$ (since the right side of all the members in $\Gamma_i$ are aligned with the right boundary of $\chi_i$). The points inside $S_i$ ($\subseteq \chi_i$) can only hit the rectangles of $\Gamma_i$. We perform a horizontal line sweep from the base $\lambda$ of $\chi_i$ upwards to get a subset $\hat{Q}_i=\{q_i,q_2, \ldots\} \subseteq (Q^a \cap S_i)$ such that $x(q_\alpha) < x(q_\alpha+1)$ and $y(q_\alpha) < y(q_\alpha+1)$ for each $q_\alpha \in \hat{Q}_i$. }
\begin{lemma} \label{r}
\begin{itemize}
\item[(a)]\blue{ The elements in the optimum solution for the hitting set of the rectangles in $\Gamma_i$ ls a subset of $\hat{Q}_i$, and } 
\item[(b)] \blue{if a rectangle $r \in \Gamma_i$ is hit by $q_\alpha$ and $q_\beta$, $q_\alpha,q_\beta \in \hat{Q}_i$ and $\alpha \leq \beta$, then $r$ is hit by all the points $\{q_\alpha, q_{\alpha+1}, \ldots q_\beta\}\in \hat{Q}_i$.}
\end{itemize}
\end{lemma}
{\bf Proof:} 
\blue{(a) Consider a rectangle $r \in \Gamma_i$ that is hit by a point $q\in Q^a$, where $q \not\in \hat{Q}_i$. Surely, $q \in \chi_i$. As we have correctly computed $\hat{Q}^a$, the point $q$ cannot be  to the left of the left-most element of $\hat{Q}^a$. Thus, assume that $x(q_\alpha) < x(q) < x(q_{\alpha+1})$, $q_\alpha,q_{\alpha+1} \in \hat{Q}_i$. Here we need to consider two cases: (i) $y(q) <y(q_\alpha)$ and (ii) $y(q) >y(q_{\alpha+1})$. Case (i) cannot happen due to the presence of $q_\alpha \in \hat{Q}_i$; Case (b) is also not possible due to the fact that if $r$ is hit by $q$ then $r$ is also hit by $q_{\alpha+1}$, and hence the existence of such a rectangle $r \in \Gamma_i$ is not possible. }

\blue{(b) Follows from the fact that as $r$ is hit by $q_\alpha$, its left boundary is to the left of $q_\alpha$, and since it is hit by $q_\beta$ its top boundary is above $q_\beta$. Thus, all the members of $\hat{Q}_i$ between $q_\alpha$ and $q_\beta$ lies inside $r$. \hfill \qed}

\blue{Lemma~\ref{r}(a) suggests that the variables involved in the ILP ${\cal V}_{left}$ correspond to the points in $\cup_{i=1}^{2|{\cal I}|-1}\hat{Q}_i$. The sets $\Gamma_i$ and $\Gamma_j$ are different, and $\hat{Q}_i \cap \hat{Q}_j =\emptyset$, for all possible pairs of strips $\chi_i$ and $\chi_j$, $i \neq j$, $i,j \in \{1,2,\ldots, 2|{\cal I}|-1\}$ . Thus, Lemma~\ref{r}(b) suggests that the coefficient matrix corresponding to the constraints of the rectangles in
$\cup_{i-1}^{2|{\cal I}|-1} \Gamma_i$ is a totally monotone matrix (satisfies the consecutive-1 property). Thus, the solution $\overline{OPT}_{{\cal V}_{left}}''$ of the LP corresponding to the ILP in Equation~\eqref{xa} is integer valued. Similarly, the solution $\overline{OPT}_{{\cal V}_{right}}''$ of the LP in Equation~\eqref{xb} is also integer valued. }
}

\section{\textsc{Min-ID-Code} for geometric intersection graphs} \label{sec:IDcode}
	In this section, we will use techniques similar to those used in the previous sections and apply them to the setting of the graph problem \textsc{Min-ID-Code}, for the intersection graph of axis-parallel unit squares (unit square graphs). To the best of our knowledge, \textsc{Min-ID-Code} was not yet studied for unit square intersection graphs in the literature. 

	Here, the input is a set $S$ of axis-parallel unit squares in 2D. In the graph $G=(V,E)$, the nodes in $V=\{v_1,\ldots,v_n\}$ correspond to the squares in $S$; an edge $e_{ij}=\{v_i,v_j\} \in E$ if the squares corresponding to $v_i,v_j$ intersect. 

	Note that it is not known in the literature whether \textsc{Min-Id-Code} on unit square graphs is NP-hard, however, the techniques from~\cite{MS09} used for unit disk graphs can most certainly be applied to prove it.

	We can reformulate \textsc{Min-ID-Code} for unit square graphs in geometric terms: the objective is to compute a subset $S_{opt} \subseteq S$ of minimum cardinality such that each square in $S$ intersects some square in $S_{opt}$, and for each pair of squares $s_i,s_j\in S$, there exists a square $\sigma \in S_{opt}$ such that $(\sigma \cap s_i \neq \emptyset$ and $\sigma \cap s_j = \emptyset)$ or $(\sigma \cap s_i = \emptyset$ and $\sigma \cap s_j \neq \emptyset)$. If we do only satisfy the discrimination constraint, then in order to satisfy the domination constraint, we may need to include at most one more square from $S$ in $S_{opt}$.

	\begin{figure}[ht!]
		\centering
		\begin{minipage}{0.3\textwidth}
			\centering
			\includegraphics[scale=0.4]{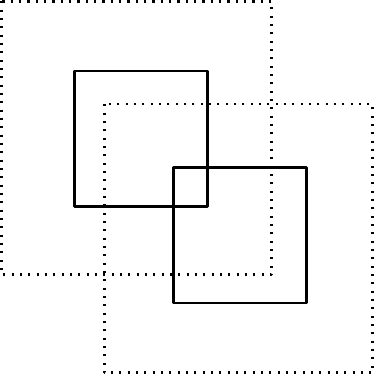} 
		\end{minipage} 
		\begin{minipage}{0.3\textwidth}
			\centering
			\includegraphics[scale=0.4]{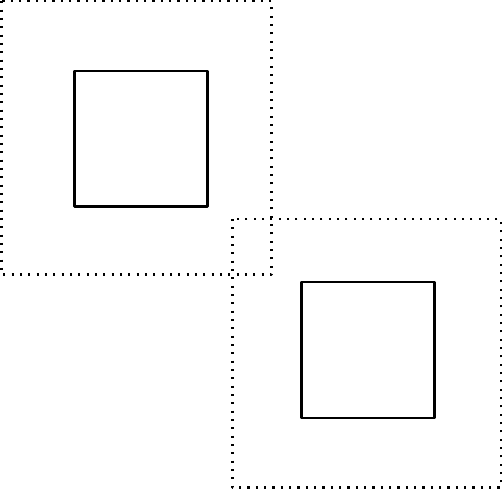} 
		\end{minipage} 
		\begin{minipage}{0.3\textwidth}
			\centering
			\includegraphics[scale=0.4]{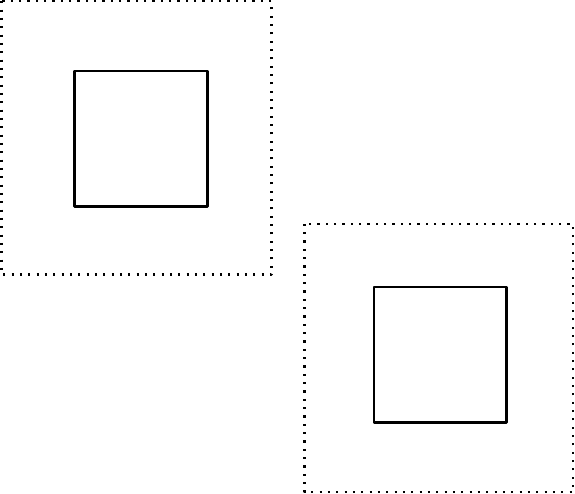} 
		\end{minipage} 
		\caption{Possible intersection patterns of a pair of axis-parallel unit squares (full lines): the dotted square around each square corresponds to the locations where a square centered at this point will intersect the enclosed unit square.}
		\label{figsq}	
	\end{figure}	

	\begin{figure}[ht!]
		\centering
		\begin{minipage}{0.32\textwidth}
			\centering
			\includegraphics[scale=0.4]{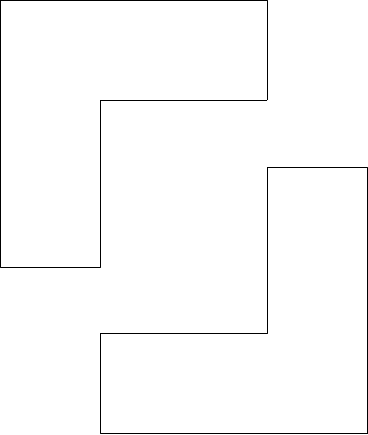} 
		\end{minipage} 
		\begin{minipage}{0.32\textwidth}
			\centering
			\includegraphics[scale=0.4]{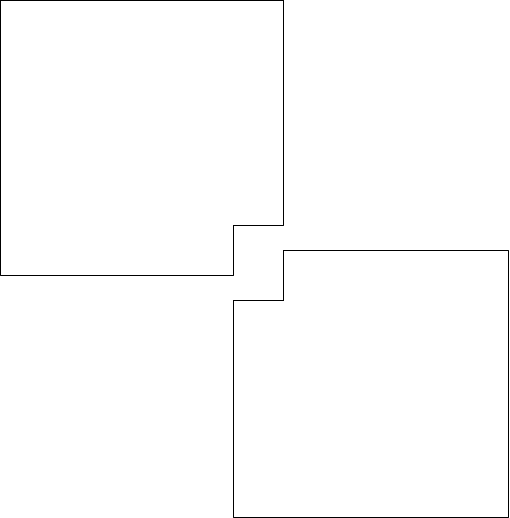} 
		\end{minipage} 
		\begin{minipage}{0.32\textwidth}
			\centering
			\includegraphics[scale=0.4]{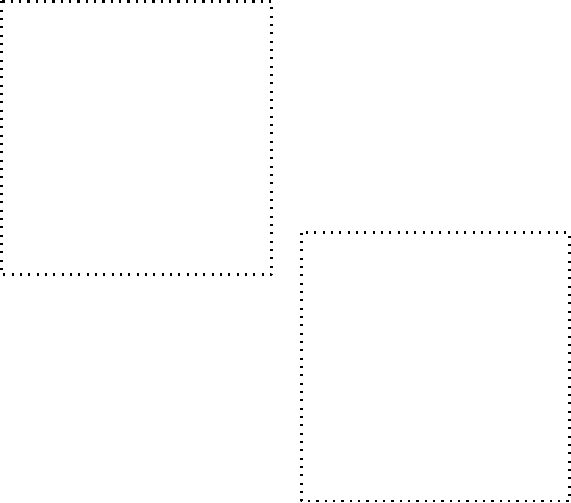} 
		\end{minipage} 
		\caption{Feasible regions for placing the center of the square $s \in ID$ for discriminating $s_i,s_j \in S$: three possible situations.} 
		\label{figsqq}	
	\end{figure}	

	Let $S'\subseteq S$ be an identifying code for the set of squares in $S$. A square $\sigma \in S'$ intersects a square $s_i\in S$ if the center of $\sigma$ is placed inside the square $\delta$ centered at the center of $s$ and the side-length of $\delta$ is twice the side-length of $s$ (shown using dotted line around $s_i$ in Figure~\ref{figsq}). In order to satisfy the discrimination constraint among $s_i,s_j \in S$, the center of a square $\sigma \in S'$ must be placed inside $\delta_i \nabla \delta_j$, where $\nabla$ is the symmetric difference operator, i.e., $(\delta_i \setminus \delta_j) \cup (\delta_j \setminus \delta_i)$. In Figure~\ref{figsq}, different patterns of intersection of a pair $s_i,s_j \in S$ are depicted, along with their covering regions $\delta_i,\delta_j$. Thus, in order to satisfy the discrimination constraint $(s_i,s_j)$, we need to place the center of a square $\sigma \in S'$ in the regions shown in Figure~\ref{figsqq}.

	Thus, as in Section~\ref{appxAlgo}, we can solve \textsc{Min-ID-Code} for unit square graphs by solving a problem of hitting the feasible regions corresponding to each pair of squares $s_i,s_j \in S$ using the centers of the squares in $S$. The objective will be to choose the minimum number of hitting squares from $S$. Thus, the same techniques as in Section~\ref{appxAlgo} can be applied, and we obain the following theorem.

	\begin{theorem} \label{thm:IDcode-4-approx}
		\textsc{Min-ID-Code} has a polynomial-time approximation algorithm for unit square graphs (if the unit square intersection model of the input graph is known) that produces a solution of size at most $64 \cdot OPT+1$, where $OPT$ is the size of an optimal solution.
	\end{theorem}

\section{Conclusion}\label{sec:conclu}
	We have seen that \textsc{Discrete-G-Min-Disc-Code} is NP-complete, even in 1D. This is in contrast with most covering problems and to \textsc{Continuous-G-Min-Disc-Code}, which are polynomial-time solvable in 1D~\cite{covering,GledelP19}.

	We also proposed a simple $2$-factor approximation algorithm for the \textsc{Discrete-G-Min-Disc-Code} problem in 1D, and a PTAS for a special case where each interval in the set $S$ is of unit length. It seems challenging to determine whether \textsc{Discrete-G-Min-Disc-Code} problem in 1D becomes polynomial-time for unit intervals. As noted in~\cite{GledelP19}, this would be related to \textsc{Min-ID-Code} on \emph{unit} interval graphs, which also remains unsolved~\cite{FoucaudMNPV17}. In fact, it also seems to be unknown whether \textsc{Continuous-G-Min-Disc-Code} problem in 1D remains polynomial-time solvable with the restriction that each interval is of unit length. However our PTAS algorithm for \textsc{Discrete-G-Min-Disc-Code} problem in 1D also produces a PTAS for the \textsc{Continuous-G-Min-Disc-Code} problem. We also do not know whether a PTAS exists for the general 1D case.

	In 2D, both \textsc{Continuous-G-Min-Disc-Code} and \textsc{Discrete-G-Min-Disc-Code} problems are NP-complete even when $S$ must be a set of axis-parallel unit square objects. We propose polynomial-time constant factor approximation algorithms for both \textsc{Continuous-G-Min-Disc-Code} and \textsc{Discrete-G-Min-Disc-Code} using the rounding of the relaxation of integer programming to linear programming. The question remains whether there exists any algorithm with better constant approximation or a  PTAS.

        We remark that all our results for axis-parallel unit squares also hold when the objects are axis-parallel rectangles of a specified (same) size i.e., the shape of all the rectangles match in height and width.
	
	As we observed, all the techniques for designing approximation algorithms for solving \textsc{Discrete-G-Min-Disc-Code} problems in 2D work for solving the \textsc{Min-ID-Code} problem for an intersection graph of a set of axis-parallel rectangles of same size. It is an interesting problem whether similar approximation algorithms exist, where the objects in $S$ are unit disks, or arbitrary axis-parallel rectangles.
	
	In the recent literature, a problem known as \textsc{Red-Blue Separation} is being studied. In this problem, given a set $R$ of red-colored points and a set $B$ of blue-colored points in the plane, the objective is to find at most $k$ geometric objects that separate $R$ from $B$, that is, each cell in the arrangement of these geometric objects contains points of at most one color (see~\cite{neeldhara} and the references in that paper). The problem \textsc{G-Min-Disc-Code} studied here can be seen as one where each point has different color. Our techniques for both the continuous and discrete versions of \textsc{G-Min-Disc-Code} problems will also work for the \textsc{Red-Blue-Separation Problem} with axis-parallel rectangles of same size in the continuous and discrete cases, respectively. Here, instead of considering segments joining every pair of points in the given point set, we join each red point in the set $R$ with each point in the set $B$ of blue points, and solve the segment stabbing problem with a set of rectangles (suitably positioned in the continuous case and a given set of rectangles in the discrete case). The approximation factors of both these problems will be the same as those obtained for the corresponding version of \textsc{G-Min-Disc-Code}.

\paragraph{Acknowledgements.} \green{We thank the anonymous referees for their helpful comments that helped us improving the quality of the paper. We also thank Rajiv Raman for helpful discussions and for sharing his lecture notes~\cite{Raman} with us.}

We do not have such Conflict of Interest.
        
\bibliographystyle{abbrv}
\bibliography{bib3}
\end{document}